\documentclass[fleqn,10pt]{wlscirep}
\usepackage[utf8]{inputenc}
\usepackage[T1]{fontenc}
\usepackage{subcaption}
\usepackage[utf8]{inputenc}
\usepackage{amsthm}
\usepackage{hyperref}
\usepackage{algorithm}
\usepackage{algpseudocode}
\usepackage[utf8]{inputenc} 
\usepackage{amsmath} 
\usepackage{comment}
\usepackage{listings}
\usepackage{xcolor}
\usepackage[left]{lineno}

\usepackage{amsthm}

\newtheorem{theorem}{Theorem}
\newtheorem{construction}{Construction}
\theoremstyle{definition}

\newtheorem{definition}{Definition}

\title{Error characterization and error correction approaches in combinatorial DNA-based storage}

\author[1,2,*,+]{Inbal Preuss}
\author[2,*,+]{Omer Sabary}
\author[3]{Ryan Gabrys}
\author[1,2]{Zohar Yakhini}
\author[2]{Eitan Yaakobi}
\author[1]{Leon Anavy}
\affil[1]{Reichman University, Computer Science, Herzliya, Israel}
\affil[2]{Technion - Israel Institute of Technology, Computer Science, Haifa, Israel}
\affil[3]{Department of Electrical and Computer Engineering, University of California San Diego, La Jolla, CA, USA}

\affil[*]{Emails: inbalpreuss@gmail.com, omer.orange@gmail.com}
\affil[+]{these authors contributed equally to this work}

\begin{abstract}
Data storage in DNA has recently emerged as a promising archival solution, offering space-efficient and long-lasting digital storage. DNA’s ultra-high density and durability make it an attractive medium for long-term data storage. Among the various approaches, combinatorial DNA encoding further enhances this potential by increasing the logical density through the use of combinations of DNA shortmers, where each sequence position is represented by a set of predefined short DNA fragments. This approach allows for the encoding of a larger data volume using fewer synthesis cycles. However, this method introduces unique challenges, particularly in terms of synthesis and sequencing errors.

In this study, we focus on the characterization of errors in combinatorial DNA-based storage systems. Our analysis revealed that asymmetric combinatorial erasure errors, defined as the omission of a single shortmer from the set defining the combinatorial letter, are a prevalent error type in combinatorial DNA-based storage, particularly in large-scale systems where read coverage is limited. We analyzed two previously published datasets, observing a high frequency of erasure errors, where missing sequences obstruct the reconstruction of specific combinatorial letters.

To better understand these observations, we conducted a large-scale experimental proof-of-concept of a combinatorial DNA-based storage system and evaluated the error characteristics of the system. Our analysis confirmed that erasure errors become increasingly prominent with reduced sequencing depth. We demonstrated that below 50 reads per sequence, the frequency of erasure errors sharply increased. We developed an asymmetric error-correcting code specifically designed to address these errors. The code utilizes tensor-product (TP) codes to integrate standard erasure and substitution-correcting codes (such as Reed-Solomon (RS) codes) with Varshamov-Tenengolts (VT) codes, which are asymmetric error-correcting codes.

We validated the performance of our new error correction code both in simulations and in a second large-scale experiment. The experimental comparison was designed to directly compare our suggested code with the more straightforward 2D Reed-Solomon (2D RS) scheme. Our method consistently outperformed the 2D RS scheme, particularly in scenarios dominated by erasure errors. Notably, in the second large-scale experiment, our method demonstrated superior decoding accuracy even under low coverage conditions, where the traditional 2D RS approach struggled to decode the data.

Our findings demonstrate the importance of tailored error correction schemes in DNA-based data storage. By directly addressing the asymmetric nature of errors in combinatorial DNA, our method provides improved decoding accuracy under a wide range of conditions. The integration of such tailored error correction methods with existing DNA-based data storage technologies has the potential to deliver more reliable and scalable DNA-based data applications.

\end{abstract}
\begin{document}
\flushbottom
\maketitle
%
%
\thispagestyle{empty}

\section*{Introduction}
The global data landscape is expanding rapidly, with estimates placing it at 180 zettabytes by the end of 2025 and projections suggesting it will exceed 400 zettabytes by 2028~\cite{taylor2023amount}. Current storage devices, such as magnetic tapes, hard disks, and solid-state drives, are struggling to meet this demand~\cite{Forbes, RGR18}, mostly because of limitations in physical space and energy consumption. As an alternative solution, DNA-based data storage systems has been suggested, offering high density, long-span durability, and low maintenance energy consumption.  Since 2012, various studies presented successful DNA-based storage solutions, both in academia and industry~
\cite{church2012next, grass2015robust, goldman2013towards,yazdi2017portable, erlich2017dna, organick2018random, anavy2019data, bar2024zettabyte, sabary2024survey,bar2025scalable}. In addition to these works, several recent studies have proposed new encoding approaches for DNA-based data storage. Wu \textit{et al.}~\cite{wu2025stable} introduced the Repeating Substring Tree Encoding (RSTE) method, which identifies repeated substrings in the source data and encodes them as DNA motifs to improve the encoding rate while ensuring that the resulting sequences satisfy standard biochemical constraints such as run-length limits, GC balance, and end constraints. Other examples include a parity encoding and local mean iteration (PELMI) scheme for robust DNA image storage~\cite{cao2024PELMI} and an invertible neural network–based DNA image storage method (INNSE) that applies up–down sampling to reduce the number of required DNA sequences and uses an internal base-level self-correction mechanism without additional redundancy~\cite{liu2025family}.

In general, the in-vitro DNA-based storage pipeline works as follows. First, the user's information is encoded into DNA sequences (sequences over the alphabet of ${A, C, G, T}$) based on a predefined coding scheme. These sequences are termed \emph{encoded sequences}. Next, molecules representing the encoded sequences are generated using a biochemical process termed \emph{DNA synthesis}, which produces single-stranded DNA oligonucleotides (or in short, \emph{oligos}). These synthesized oligos, which now store the user's information, are placed together, typically unordered, in a small storage container, usually a vial. To retrieve the data, a small sample of the oligos is taken, and using \emph{DNA sequencing}, the nucleotide base sequences are read back and decoded. Since both synthesis and sequencing introduce errors, it is necessary to encode the information with an \emph{error-correcting code (ECC)}~\cite{chandak2019improved, anavy2019data, bar2025scalable, organick2018random, yazdi2017portable, erlich2017dna}. Thus,  by including redundancy within the encoded sequences, it is possible to create a system that maintains data reliability even in the presence of errors.

Biological and biochemical constraints are fundamental to the design of reliable DNA-based data storage systems. Synthesized oligos must satisfy requirements such as maintaining appropriate GC content, avoiding long homopolymer runs, minimizing secondary structure formation, and preventing unintended cross-hybridization or self-priming during PCR and sequencing. These considerations have motivated a range of constrained coding techniques, including mutually uncorrelated (MU) codes and k-weakly mutually uncorrelated (k-WMU) codes, which restrict long prefix–suffix overlaps and are commonly used to design unique sequences that improve access efficiency in DNA-based storage systems. Recent work by Liu \textit{et al.}~\cite{liu2025family} presented systematic constructions of MU and k-WMU code families that explicitly account for multiple biological and biochemical constraints during sequence design. While our work focuses on combinatorial DNA encoding, these biological and biochemical constraints must also be taken into consideration in the design and analysis of combinatorial DNA-based storage systems.

A major bottleneck that still limits the adoption of DNA-based data storage solutions is the cost and latency of DNA synthesis~\cite{antkowiak2020low}. On the other hand, most of the synthesis methods have inherent redundancy, which is reflected by the fact they generate multiple copies per oligo, usually in the order of thousands to hundred thousands~\cite{leproust2010synthesis}. To leverage this bottleneck, the idea of \emph{Composite DNA} has been suggested in 2019 by Anavy \textit{et al.}~\cite{anavy2019data} and by Choi et al~\cite{choi2019high}. These methods use the inherent redundancy of the synthesis to introduce more alphabet letters beyond the standard $\{A, C, G, T\}$. Therefore, the potential information density increases, meaning that more information can be encoded using the same number of synthesis cycles. Informally speaking, a composite DNA letter is a mixture consisting of more than one nucleotide in a predetermined ratio. This predetermined mixture is represented as randomly synthesizing different oligos on different molecules in the same position. Thus, to identify the composite letter, it is required to sequence a fraction of these multiple copies and to estimate the original composition of the different nucleotides. Although this method increases capacity, encoding using composite DNA introduces new errors and thus requires specified error correction codes. Error-correcting codes for composite DNA were suggested by Walter et. al~\cite{walter2024coding}, while results related to the information capacity of this channel were given by Cohen \textit{et al.}~\cite{cohen2024optimizing} and Kobovich \textit{et al.}~\cite{kobovich2023m, kobovich2025deepdive}.  

Another method to increase logical density and reduce overall costs in DNA-based storage systems is to use the combinatorial DNA synthesis method, which  have recently been suggested \cite{preuss2024efficient, roquet2021dna, yan2023scaling}.
Molecular mechanisms for generating combinatorial sets of DNA sequences include combinatorial DNA synthesis and combinatorial DNA assembly. Nonetheless, they all share a common encoding approach of using combinations of DNA sequences as the data-holding unit in the system. 
These common characteristics was previously referred to as \textit{combinatorial DNA encoding} and formally defined as follows \cite{preuss2024efficient}. Let $\Omega$ be a set of $N$ DNA k-mers that are used as building blocks and selected to minimize mix-up errors (\textit{e.g.,} by ensuring a minimal distance $d$ between each pair).
A combinatorial alphabet $\Sigma$ is defined  such that each letter $\sigma \in \Sigma$ represents a subset of $K$ k-mers from $\Omega$, forming an alphabet of size $|\Sigma| \leq \binom{N}{K}$.
A binary message of length $B$ bits is encoded by generating a sequence of $M$ combinatorial letters $\sigma^1\cdots\sigma^M$, where $M = \frac{B}{\lfloor \log_{2}(|\Sigma|) \rfloor}$. Fig.~\ref{fig:view of a combinatorial alphabet} illustrates this combinatorial encoding approach presenting an example combinatorial alphabet with $N=8$ and $K=4$ and the encoding of a 42 bits message using a 7 letter combinatorial sequence.
After encoding the input binary message into a set of combinatorial DNA sequences, various molecular techniques can be used to generate a pool of DNA molecules. Each actual oligo or molecule holds a single combination of the DNA k-mers and collectively, the pool represents the original set of combinatorial sequences. 
After sequencing a sample of the DNA molecules, reconstruction of the combinatorial sequences involves three key steps. First, the reads are split into groups where each group represents a single combinatorial sequence. This step is typically done using a barcode sequence located at a predefined position of each DNA molecule. Next, each combinatorial sequence is reconstructed independently from the grouped reads, one position at a time, by detecting $K$ unique k-mers in the relevant position and identifying the corresponding combinatorial letter. Lastly, the original message is decoded by applying error-correcting codes (Refer to Fig. 2 in \cite{preuss2024efficient} for details on the combinatorial encoding pipeline).

\begin{figure}
    \centering
    \includegraphics[width=0.9\linewidth]{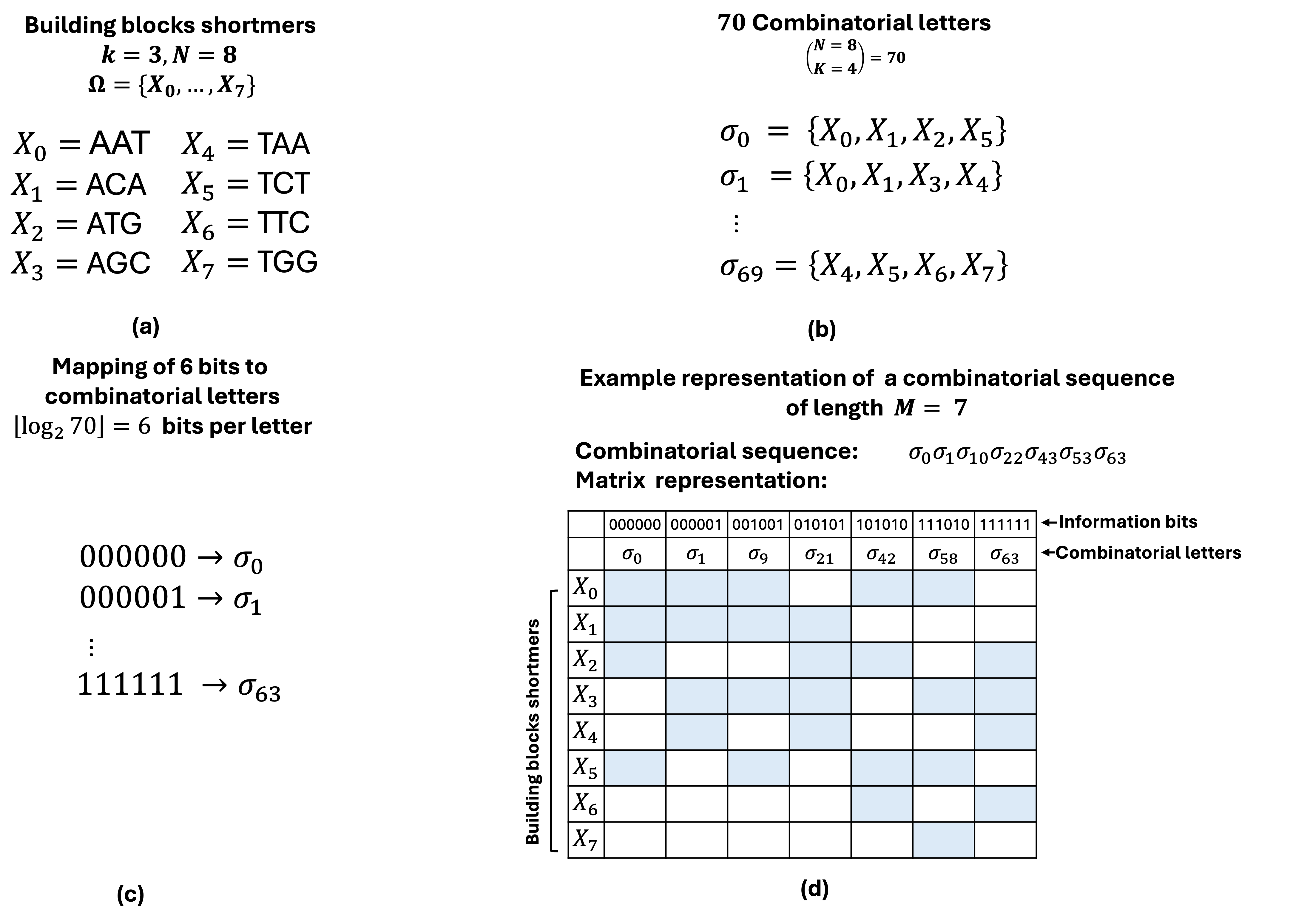}
    \caption{Schematic overview of a combinatorial alphabet and encoding. (a) The plot describes  $\Omega = \{ X_0, \ldots, X_7\}$, the set of building block shortmers of length $k=3$. (b) The building block shortmers can create a set of $70$ combinatorial letters, each is defined a selection of $K=4$ shortmers. (c) Since $\lfloor \log_2(70) \rfloor =6$, all the $2^6=64$ possible vectors of 6 bits are mapped into $64$ combintorial letters. (d) An example of a combinatorial sequence of length $M=7$, represented as a matrix. Each column corresponds to a combinatorial letter, and each row represents a shortmer building block. The cells corresponding to the selected shortmers in each column are highlighted in light blue.}
    \label{fig:view of a combinatorial alphabet}
\end{figure}

Standard DNA-based storage systems use error correction to address common synthesis and sequencing errors, such as single-base substitutions, insertions, and deletions. These symbol-level errors are addressed in most studies by applying error correction codes and constrained coding. Another frequent issue in DNA-based data storage systems is sequence dropout, where certain sequences are altogether missing from the NGS output, often due to biases in molecular biology processes and the effect of sampling in the readout process. To address this, sequence-level error correction codes are used allowing the recovery of complete missing or corrupt sequences from the message. Combining symbol-level (inner) and sequence-level (outer) error correction codes forms a 2-dimensional (2D) error correction scheme. A common method for addressing both symbol and sequence-level errors is leveraging the inherent multiplicity in DNA synthesis and sequencing. Increasing the sampling rate improves sequence coverage, reduces dropouts, and corrects symbol-level errors through consensus-based methods~\cite{bar2024cover,abraham2024covering}. However, when considering large scale systems, the ability to increase sequence coverage may be limited and also caries significant costs. Preuss \textit{et al.}~\cite{preuss2024sequencing} applied a variant of the coupon collector distribution to analyze and determine the required sequencing coverage in combinatorial DNA-based data storage systems. In contrast to standard DNA-based systems, the reconstruction of a combinatorial letter depends on 
observing at least one copy of each k-mer of the 8 composition, making sequence coverage a critical parameter. Even after observing $K$ unique k-mers, we are not guaranteed to recover the correct combinatorial letter due to possible errors. A proposed solution is to observe each k-mer multiple times before inferring the combinatorial letter from the $K$ most frequent k-mers. This required multiplicity ($t$) is adjustable and its effect on the overall required sequencing depth is also analyzed in literature~\cite{bar2024cover}.

Optimal error correction schemes rely on a proper characterization of the information channel and the expected error types. To properly characterize the combinatorial DNA channel, we analyzed two previously published datasets focusing on the observed error patterns, particularly when down-sampling the data. Our analysis revealed that erasure errors, where combinatorial letters with missing k-mers are observed, were a dominant error type in both datasets, which represent different synthesis protocols and sequencing technologies. While these “asymmetric combinatorial erasures” can arise as a direct consequence of base-level insertions, deletions, or substitutions that render certain shortmers unidentifiable, our analysis indicates that they also reflect higher-level misrepresentation events specific to the combinatorial structure, such as the absence of a shortmer in a given cycle. This observation highlighted the critical impact of read depth on decoding accuracy in combinatorial DNA data storage.  Given the nature of some combinatorial DNA experimental systems, where high variations in sequence length and structure are observed, we developed a BLAST-based approach for sequence analysis. This method enabled accurate identification of k-mers within each read, even in cases where synthesis and sequencing errors disrupted the expected sequence structure.
To further investigate these observations we developed, together with Helixworks Technologies Ltd, a large-scale experimental proof-of-concept of a combinatorial DNA-based data storage system. The experiment, including 640 combinatorial sequences synthesized using 8 cycles, represents the first implementation of a large-scale combinatorial DNA-based storage system. We used the data from this experiment to evaluate the error characteristics of combinatorial DNA. Our analysis of the sequencing data confirmed a high prevalence of erasure errors, particularly when the read depth was reduced. We demonstrated that when the number of sequencing reads per sequence is low, the frequency of erasure errors sharply increased, significantly impacting decoding accuracy.
In response to these findings, we developed an asymmetric error-correcting code, specifically designed to address asymmetric errors in combinatorial DNA-based data storage \cite{sabary2024error}. This k-mer level error correction is introduced on top of every base-level  error correction that is being used to specifically mitigate the remaining asymmetric combinatorial errors. We compared this error correction scheme to existing error correction approaches in combinatorial DNA-based storage using simulations. Since the combinatorial DNA channel differs fundamentally from the nucleotide-level channels considered in most prior DNA storage works, we restrict quantitative comparisons to coding schemes operating under comparable combinatorial alphabets and error models.
Under these conditions,  our method consistently outperformed 2-dimensional Reed-Solomon approaches (2D RS), particularly in scenarios dominated by erasure errors, reflecting the real-world conditions observed in the analyzed datasets.
Finally, we performed a second large scale experiment, which was a direct experimental comparison using 320 sequences to evaluate the performance of our new error correction method against 320 sequences of the 2D RS approach. Our method demonstrated superior performance in recovering sequences with erasure errors, particularly under low coverage conditions, where 2D RS struggled.  These results validate the effectiveness of our approach, establishing a clear advantage over traditional methods in real-world scenarios. Our study not only provides a detailed understanding of the error characteristics in combinatorial DNA-based data storage but also offers a practical, scalable error correction solution for improving data integrity in the context of this emerging technology.

\section*{Results} \label{sec:results}

\subsection*{Error Characterization of Previously Published Datasets}
\phantomsection
\label{sec:data-analysis-error-characterization}

To characterize the channel properties of combinatorial DNA based storage systems, we analyzed two previously published datasets ~\cite{yan2023scaling, preuss2024efficient}. While both experiments demonstrated the encoding of data using combinatorial DNA, they differ in their sequencing technologies and synthesis protocols; Yan ~\textit{et al} use Bridge Oligo Assembly and Preuss~\textit{et al} use Gibson assembly. Table~\ref{tab:experiment-comparison} includes the design parameters for both experiments.

\begin{table}[ht]
\centering
\begin{tabular}{|l|l|l|}
\hline
\textbf{Parameter} & \textbf{Yan \textit{et al.} \cite{yan2023scaling}} & \textbf{Preuss \textit{et al.} \cite{preuss2024efficient}} \\ \hline
Message length (bits) & 84 & 96 \\ \hline
Sequence length/number of cycles (M) & 1 & 4 \\ \hline
Number of sequences & 8 (repeats) & 2 \\ \hline
Logical density (bits/synthesis-cycle) & 84 & 12 \\ \hline
Number of unique k-mers (N) & 96 & 16 \\ \hline
Number of unique k-mers in each letter (K) & 32 & 5 \\ \hline
Sequence length in bases & 74 bp & 220 bp \\ \hline
k-mer length & 25 bp & 20 bp \\ \hline
Synthesis protocol & Bridge Oligo Assembly & Gibson Assembly \ \ \ \ \  \\ \hline
Sequencing method & Nanopore & Illumina (MiSeq) \\ \hline
\end{tabular}
\caption{Design parameters for the datasets analyzed in this section.}
\label{tab:experiment-comparison}
\end{table}

We start by analyzing the data from Yan \textit{et al.}~\cite{yan2023scaling}. The experiment included a single combinatorial synthesis cycle (\(M=1\)) in which two DNA shortmers (\textit{i.e.} a barcode and a single payload) are assembled using Bridge Oligonucleotide Assembly.   The payload shortmer represents a single combinatorial letter with \(N=96\) and \(K=32\) encoding an 84-bit "HelloWorld" message. This experiment was repeated to generate 8 different sequences where each sequence contains a unique barcode sequence (BC), a universal sequence (U1), and a combinatorial payload sequence (P1). {The universal sequence (U1) is a short, fixed sequence that is identical across all sequences in this cycle and is used in the assembly process. Typically, each cycle has its own distinct universal sequence, which we denote by Ui, where $i$ corresponds to the cycle index.}
The sequence design in depicted in Fig.\ref{fig:helloWorld-a}. The resulting molecules were sequenced using Oxford Nanopore MinION technology, generating 102,222 reads to be analyzed.

First we examined read length distribution of the analyzed data (See Fig.~\ref{fig:helloWorld-b}) and observed an average read length of 174 nt with a standard deviation of 62 nt and a median length of 164 nt. We filtered the reads to retain only those longer than 50 nucleotides. Since many of the reads were much longer than the expected 74 nt, locating the universal sequence required a dedicated algorithm.   

For that, we employed Basic Local Alignment Search Tool (BLAST) \cite{altschul1990basic} to locate the U1 sequence position in each read. We then used this position for matching the BC and payload sequences adjacent to it. Details on this process are further elaborated in Methods~\hyperref[Methods:Data-Analysis-HelloWorld-1-cycles]{BLAST-based analysis of the data from Yan \textit{et al.}\cite{yan2023scaling}}.

Analysis of the start positions of U1, depicted in Fig.~\ref{fig:helloWorld-c}, revealed that the U1 start positions varied significantly across the reads (average = 174, median = 164, sd = 62). This variability further justified the need to use the a dedicated algorithm approach to identify the universal start position using BLAST-based algorithm. The reads were then split according to the identified barcode, distinguishing the 8 repeats, resulting in an average of $\sim$1,000 reads per sequence. After keeping only reads in which both the barcode sequence and a valid payload sequence were identified, an effective dataset of only $\sim$150 reads per sequence was generated. See sequence counts in Supplementary Table 1. As illustrated in Fig.~\ref{fig:helloWorld-d} these conditions led to a failure in recovering all 32 shortmers, highlighting significant deficiencies in read quantity and quality while also underscoring the challenges in data recovery despite complex decoding algorithms.

\begin{figure*}[htbp]
  \centering
  \begin{subfigure}[b]{0.48\linewidth}
    \centering
    \includegraphics[width=0.48\linewidth]{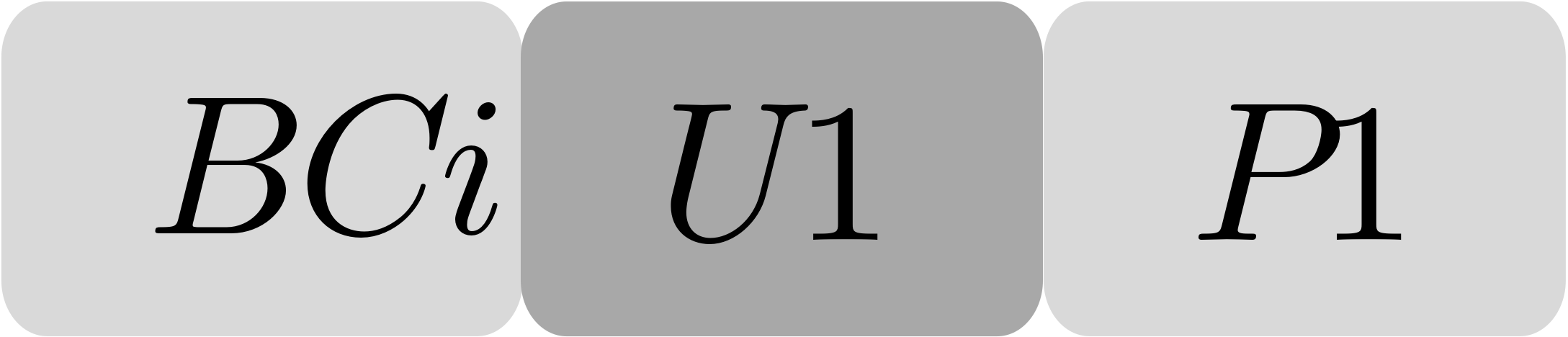}
    \caption{}
    \label{fig:helloWorld-a}
  \end{subfigure}
  \hfill
  \begin{subfigure}[b]{0.48\linewidth}
    \centering
    \includegraphics[width=\linewidth]{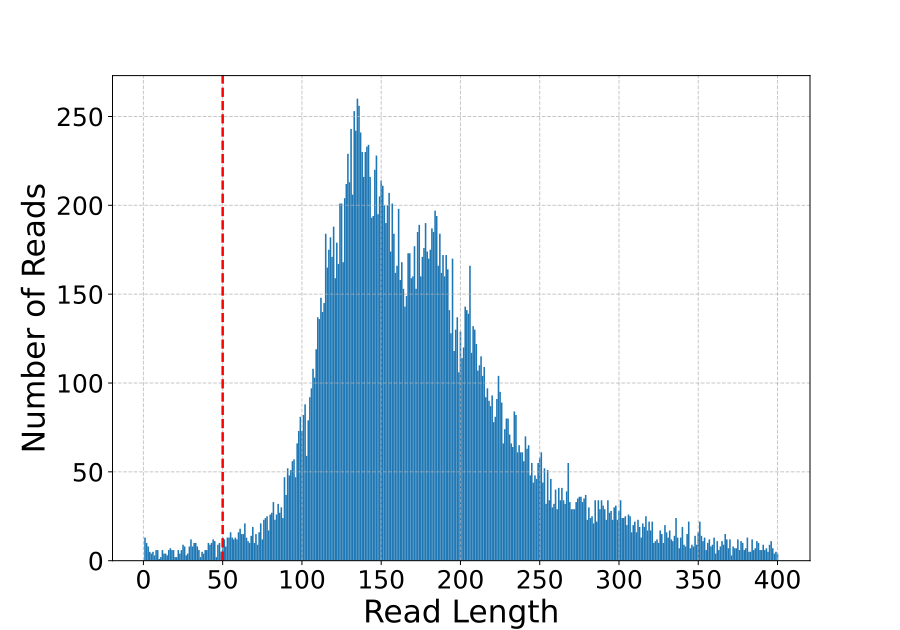}
    \caption{}
    \label{fig:helloWorld-b}
  \end{subfigure}
  
  \vspace{0.5em}
  
  \begin{subfigure}[b]{0.48\linewidth}
    \centering
    \includegraphics[width=\linewidth]{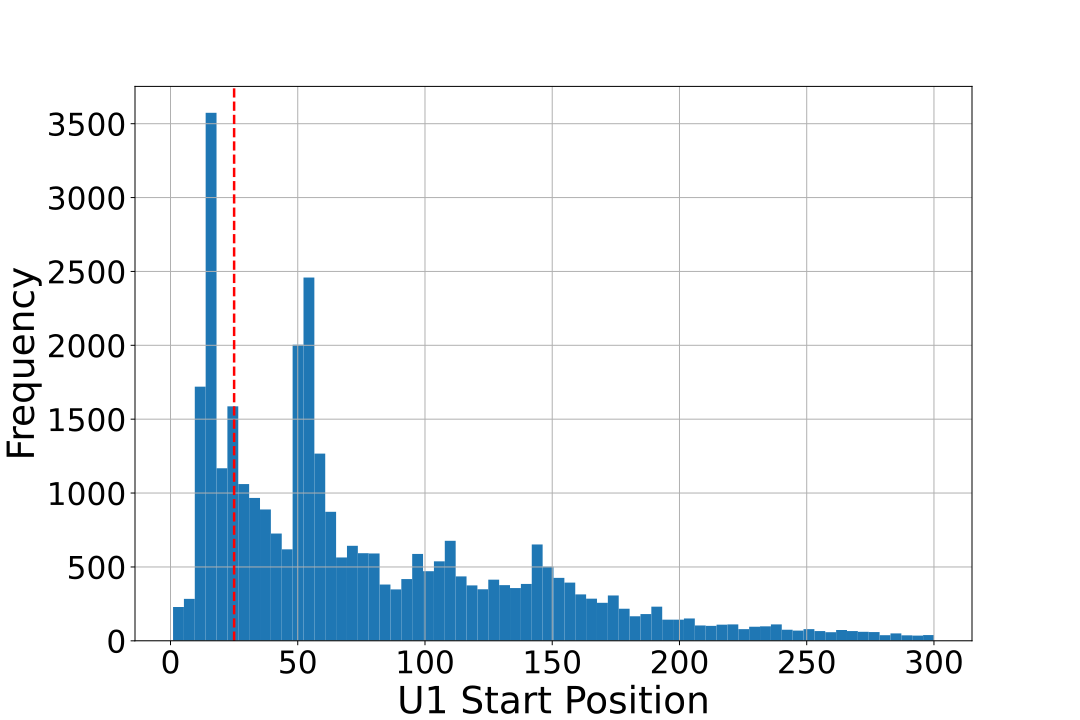}
    \caption{}
    \label{fig:helloWorld-c}
  \end{subfigure}
  \hfill
  \begin{subfigure}[b]{0.48\linewidth}
    \centering
    \includegraphics[width=\linewidth]{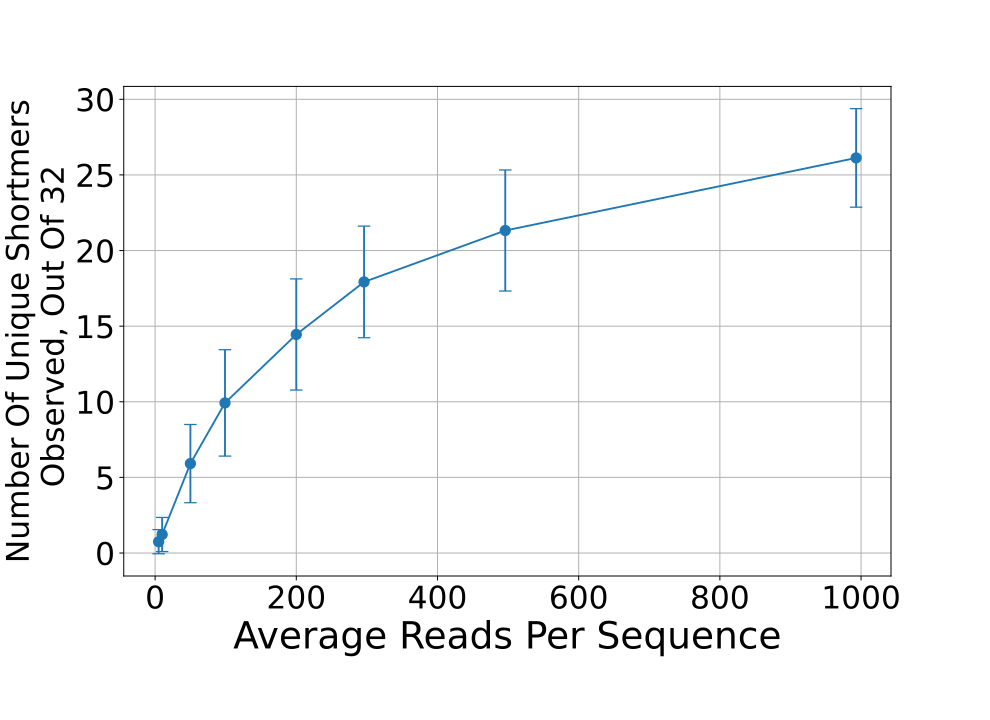}
    \caption{}
    \label{fig:helloWorld-d}
  \end{subfigure}
  \caption{Analysis of experimental data from Yan \textit{et al.}~\cite{yan2023scaling}. (a) Sequence design: each construct consists of a unique barcode (BCi), a universal sequence (U1), and a payload segment (P1). The universal sequence is fixed across all constructs in the same cycle and facilitates the assembly process.
 (b) Read length distribution, the dotted red line represents the filtering threshold used for analysis. (c) Distribution of the start position of U1 in the analyzed sequences. The dotted red line represents the designed start position. (d)  Evidence for erasure errors. X-axis is the average reads per sequence, and y-axis is the number of  unique shortmers observed out of 32, with $N=96$, $K=32$, $M=1$, 8 different sequences, the error bars are calculated over 10 repeats of randomly down-sampling the actual data.}
  \label{fig:HellowWorld}
\end{figure*}

Next, we analyzed data taken from the study by Preuss \textit{et al.}~\cite{preuss2024efficient}, which used a combinatorial Gibson assembly protocol and NGS sequencing using Illumina Miseq. This experiment included two combinatorial sequences, encoding 96 bits, which code the message “DNA Storage!”. Each sequence comprised of \(M=4\) cycles, using a combinatorial alphabet with \(N=16\), and \(K=5\) unique shortmers. This resulted in DNA molecules of length 220 nucleotides, each containing 6 universal sequences. The sequencing output included 2,634,683 total reads, of which 2,151,661 were of the expected length of 220 nucleotides (average = 217, median = 220, sd = 10, see Fig.~\ref{fig:dnastorage-a}).

Our analysis focused only on the sequences of length 220. For the first sequence, which was barcoded by BC1, there were 1,365,295 sequences, and for the second sequence (barcoded by BC2), we had 768,755 sequences, as detailed in Supplementary Table 1. The high quality of the sequencing results in this experiment made the retrieval of the barcode, universal sequences, and the 5 payload sequences in each cycle relatively easy. As a result, the original message was successfully recovered even after down-sampling the sequencing output to include an average of 100 reads per sequence (See Fig.~\ref{fig:dnastorage-b}). Further reducing the sequencing depth resulted in a failure to recover the original message. This was caused by a failure to detect all 5 shortmers in each cycle.

\begin{figure*}[htbp]
  \centering
  \begin{subfigure}[b]{0.48\linewidth}
    \centering
    \includegraphics[width=\linewidth]{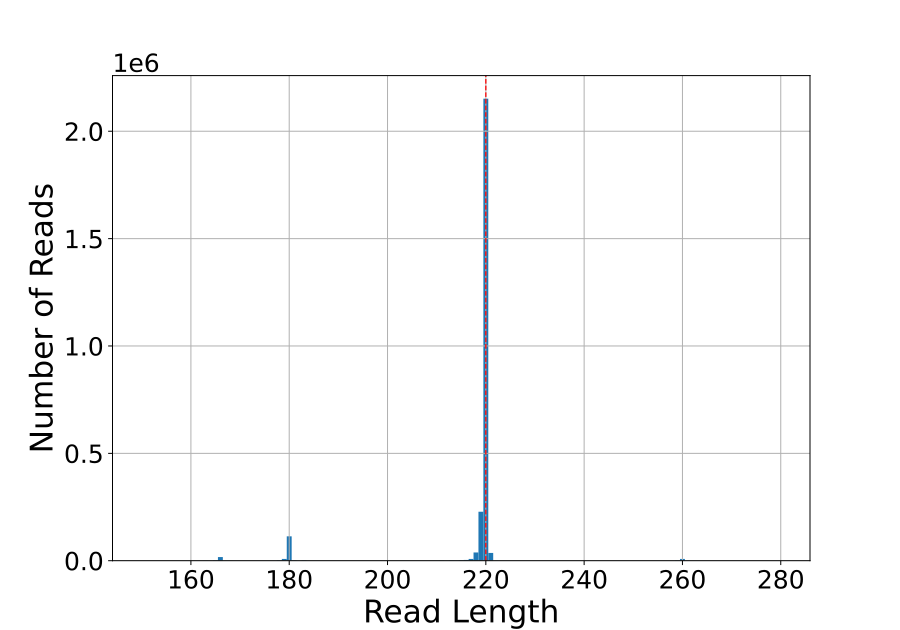}
    \caption{}
    \label{fig:dnastorage-a}
  \end{subfigure}
  \begin{subfigure}[b]{0.48\linewidth}
    \centering
    \includegraphics[width=\linewidth]{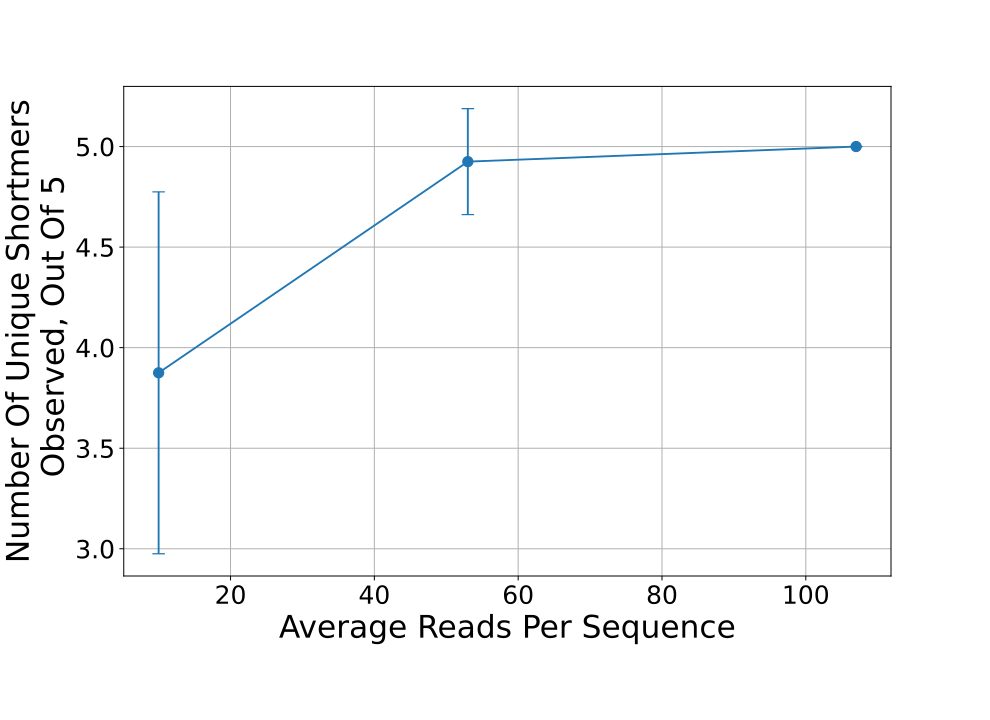}
    \caption{}
    \label{fig:dnastorage-b}
  \end{subfigure}

  \caption{Experiment analysis of Preuss \textit{et al.}~\cite{preuss2024efficient}. (a) Read length distribution, the red dotted line represents the correct sequence length (220nt) (b) Evidence of erasure errors from a downsampling simulation of sequencing reads. The $X$-axis represents the average number of reads per sequence, and the $Y$-axis shows the number of unique shortmers observed out of 5. The simulation was performed with parameters $N = 16$, $K = 5$, and $M = 4$, over 2 different sequences. Error bars indicate the standard deviation over 10 independent down-sampling repeats.
}
  \label{fig:dnastorage}
\end{figure*}

The results from the analysis of the two datasets suggest that a special case of an erasure error is a prominent error while using combinatorial DNA-based data storage systems. This is especially true when using low read counts, as required by larger scale systems. This underscores the need to develop error correction codes, specifically designed to mitigate such erasure errors in combinatorial DNA data storage systems. For a detailed description of the error correction mechanism, see Section~\hyperref[sec:Error-Correcting-Codes-for-Combinatorial-DNA]{Error-Correcting-Codes-for-Combinatorial-DNA}, which shows code construction that is based on mathematical effort by Sabary \textit{et al.}~\cite{sabary2024error}.

\subsection*{Large-Scale Experimental Proof of Concept}
\phantomsection
\label{sec:Experimental-Proof-of-Concept}

Previous research in DNA-based data storage, particularly utilizing combinatorial DNA encodings, has highlighted the necessity of validating the technology and examining the information channels at larger scales. For example, the experimental work of Anavy \textit{et al.}~\cite{anavy2019data}, highlights that large composite DNA alphabets cannot be used in large scale due to the challenges of inferring exact base ratios. This was also demonstrated in the theoretical work that analyzed the error patterns and coverage depth issues in composite and combinatorial DNA-based data storage systems ~\cite{preuss2024sequencing, gruica2024combinatorial, gruica2024geometry, cohen2024optimizing, sokolovskii2024coding}.
Thus, scaling up combinatorial DNA-based data storage systems requires a comprehensive analysis of the information channel to maintain reliability and efficiency.
For that we preformed a large-scale experimental proof of concept study of an end-to-end combinatorial DNA-based storage system. We encoded and decoded a 2.67KB input message using the combinatorial shortmer scheme. Since combinatorial DNA synthesis technology is not yet available, the combinatorial approach was implemented using Bridge oligonucleotide assembly by Yan \textit{et al.}~\cite{yan2023scaling} as an ad-hoc imitation for combinatorial synthesis. 

The experiment constructed 640 combinatorial sequences with similar characteristics to the experiment described in Yan \textit{et al.}~\cite{yan2023scaling} (see Table~\ref{tab:Experiment parameters}). Each combinatorial sequence contained a barcode (referred to as inner barcodes 1-64) and $M=8$ payload cycles over a combinatorial alphabet with \(N=8\) and \(K=4\). The assembly was performed using DNA fragments composed of a 20-mer information sequence and an overlap of 25 bp between adjacent fragments, as shown in Fig.~\ref{fig:large_scale_experiment_640BC_8cycles-a}. 
The 640 combinatorial DNA sequences were constructed in 10 separate pools of 64 sequences each. Each pool was then barcoded separately (referred to as outer barcodes 1-10) and prepared to sequencing using Oxford Nanopore Technologies (See \hyperref[Methods:Molecular-biology-experimental-protocols]{Protocol for large scale proof of concept experiment}). The 10 pools were then sequenced together producing 3,855,233 reads (see Table~\ref{tab:Summary of sequencing reads analyzed in the study}). Only 2,136,534 reads of the output were of length more than 500bp (average = 516, median = 517, sd = 133, see Fig.~\ref{fig:large_scale_experiment_640BC_8cycles-b}). The average number of reads per sequence over all the 640 sequences is 405 with one pool (outer barcode 2) containing only an average of 45 reads per sequence (See Table \ref{tab:Summary of average length of outer BC}).

In this experiment, we incorporated error-correcting codes to enable data recovery even in the presence of errors. Specifically, we utilized a 2D Reed-Solomon (RS) code, as described in Preuss \textit{et al.}~\cite{preuss2024efficient}. Briefly, for each 6-symbol sequence, an inner error correction code (RS(8,6)) was applied adding two redundancy symbols and correcting a single sequence specific error. Additionally, for each block of 30 sequences, an outer code (RS(32,30)) was used adding two redundancy symbols to every sequence of 30 symbols at each given position. 

\begin{table}[ht]
\centering
\begin{tabular}{|l|l|}
\hline
\textbf{Parameters}& \textbf{Value} \\ 
\hline
Text size& 2.67KB\\ 
\hline
Number of sequences& 640\\ 
\hline
Sequence length/number of cycles (M)& 8\\ 
\hline
Logical density (bits/ Synthesis Cycle)&6\\
\hline
Number of unique k-mers (N)& 8\\ 
\hline
Number of unique k-mers in each letter (K)&4\\
\hline
Combinatorial alphabet size & $\binom{8}{4} = 70 \ge 64 = 2^6$\\
\hline
Error correction method& 2D RS\\
\hline
Inner error-correcting code& RS(8,6)\\ 
\hline
Outer error-correcting code& RS(32,30)\\ 
\hline
\end{tabular}
\caption{Design parameters of the large-scale proof of concept experiment.}
\label{tab:Experiment parameters}
\end{table}

\begin{table}[ht]
\centering
\begin{tabular}{|l|l|}
\hline
\textbf{Condition} & \textbf{Value} \\ 
\hline
Number of reads & 3,855,223\\ \hline 
 Number of reads with 500bp+&2,136,524\\ \hline 
 Number of reads with BC identified&260,245\\ \hline
Recovered sequences before error-correction (2D RS)& 631/640\\ 
\hline
Recovered sequences after error-correction (2D RS)& 640/640\\ 
\hline
\end{tabular}
\caption{Summary of the counts of the sequencing reads analyzed in the large-scale proof of concept experiment.}
\label{tab:Summary of sequencing reads analyzed in the study}
\end{table}

\begin{table}[ht]
\centering
\begin{tabular}{|l|l|l|}
\hline
Outer BC& Inner BC &Average number of reads\\ 
\hline
1& 1-64&395\\ \hline 
 2&65-128&\textcolor{red}{45}\\ \hline 
 3&129-192&454\\ \hline 
4& 193-256&646\\ 
\hline
5& 257-320&418\\ 
\hline
6& 321-384&468\\ 
\hline
7& 385-448&406\\ \hline
 8& 449-512&412\\\hline
 9& 513-576&416\\\hline
 10& 577-640&405\\\hline
\end{tabular}
\caption{Summary of the average number of reads per outer BC. Each outer BC was conducted in a different well. Note that the second outer barcode had  fewer reads, and therefore it is highlighted in red.}
\label{tab:Summary of average length of outer BC}
\end{table}

In the original sequence design, every payload sequence position is adjacent to a specific universal sequence (Fig.~\ref{fig:large_scale_experiment_640BC_8cycles-a}). However, our analysis revealed that the sequences contained gaps of various lengths with unidentifiable sequences. This can be seen in Fig.~\ref{fig:large_scale_experiment_640BC_8cycles-b} and Fig.~\ref{fig:large_scale_experiment_640BC_8cycles-c}, where the overall sequence length exceeds the expected length and the position in which each universal sequence is found within the sequence varies significantly. These observations suggest that the naive BLAST-based algorithm presented before required some adjustments to support longer combinatorial sequences containing multiple cycles. We thus adjusted the BLAST-based decoding algorithm to support the identification and analysis of multiple universal-payload sequence pairs in every analyzed read. The algorithm was also adjusted to support analysis of reads containing only a subset of the Universal-payload pairs. Full description of the adjusted algorithm is available in the Methods section ~\hyperref[Methods:640BC-8cycles]{BLAST-based analysis of large-scale combinatorial experiment}.

  \begin{figure*}[htbp]
  \centering
  \begin{subfigure}[b]{0.90\linewidth}
    \centering
    \includegraphics[width=\linewidth]{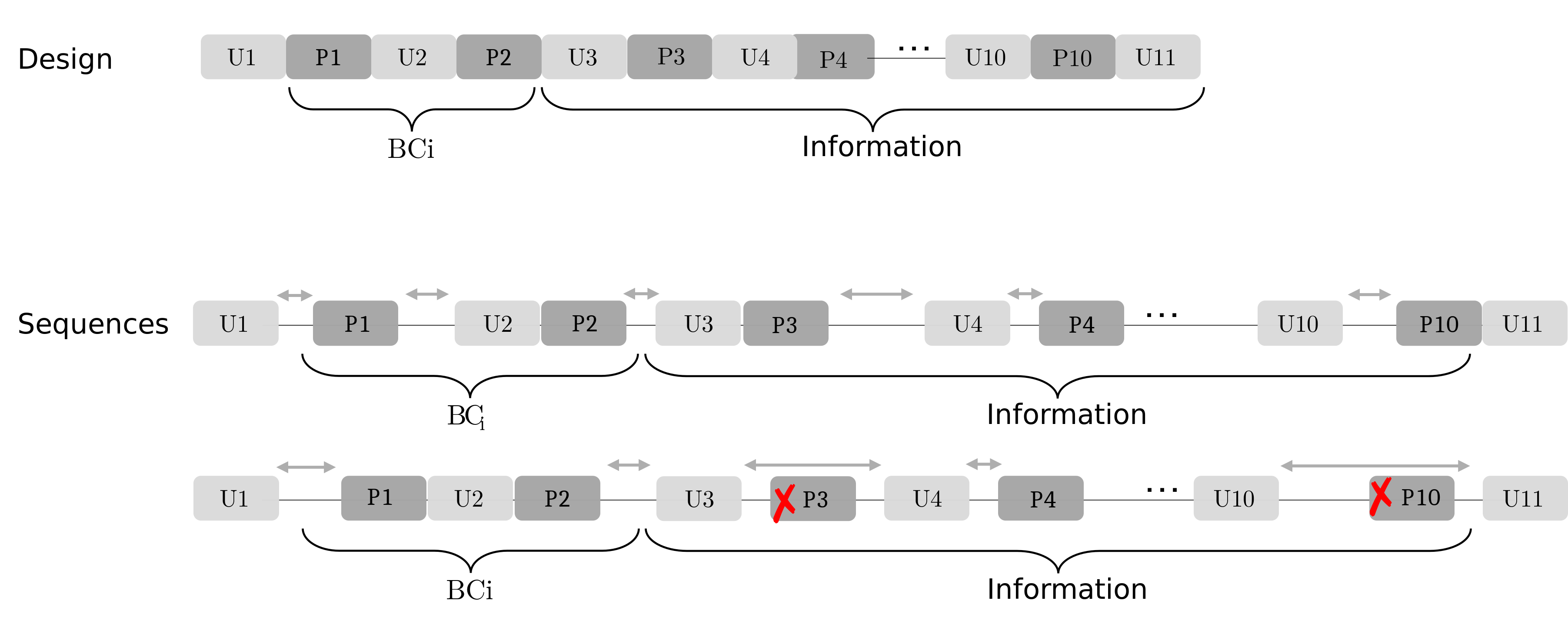}
    \caption{}
    \label{fig:large_scale_experiment_640BC_8cycles-a}
  \end{subfigure}
  
  \begin{subfigure}[b]{0.43\linewidth}
    \centering
    \includegraphics[width=\linewidth]{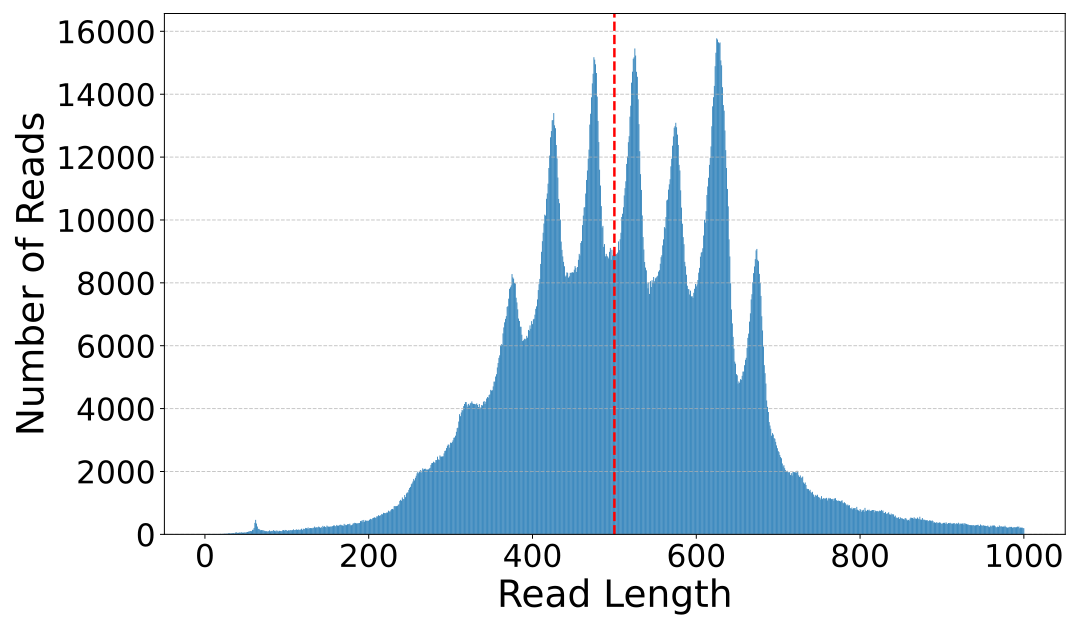}
    \caption{}
    \label{fig:large_scale_experiment_640BC_8cycles-b}
  \end{subfigure}
  \hfill
  \begin{subfigure}[b]{0.56\linewidth}
    \centering
    \includegraphics[width=\linewidth]{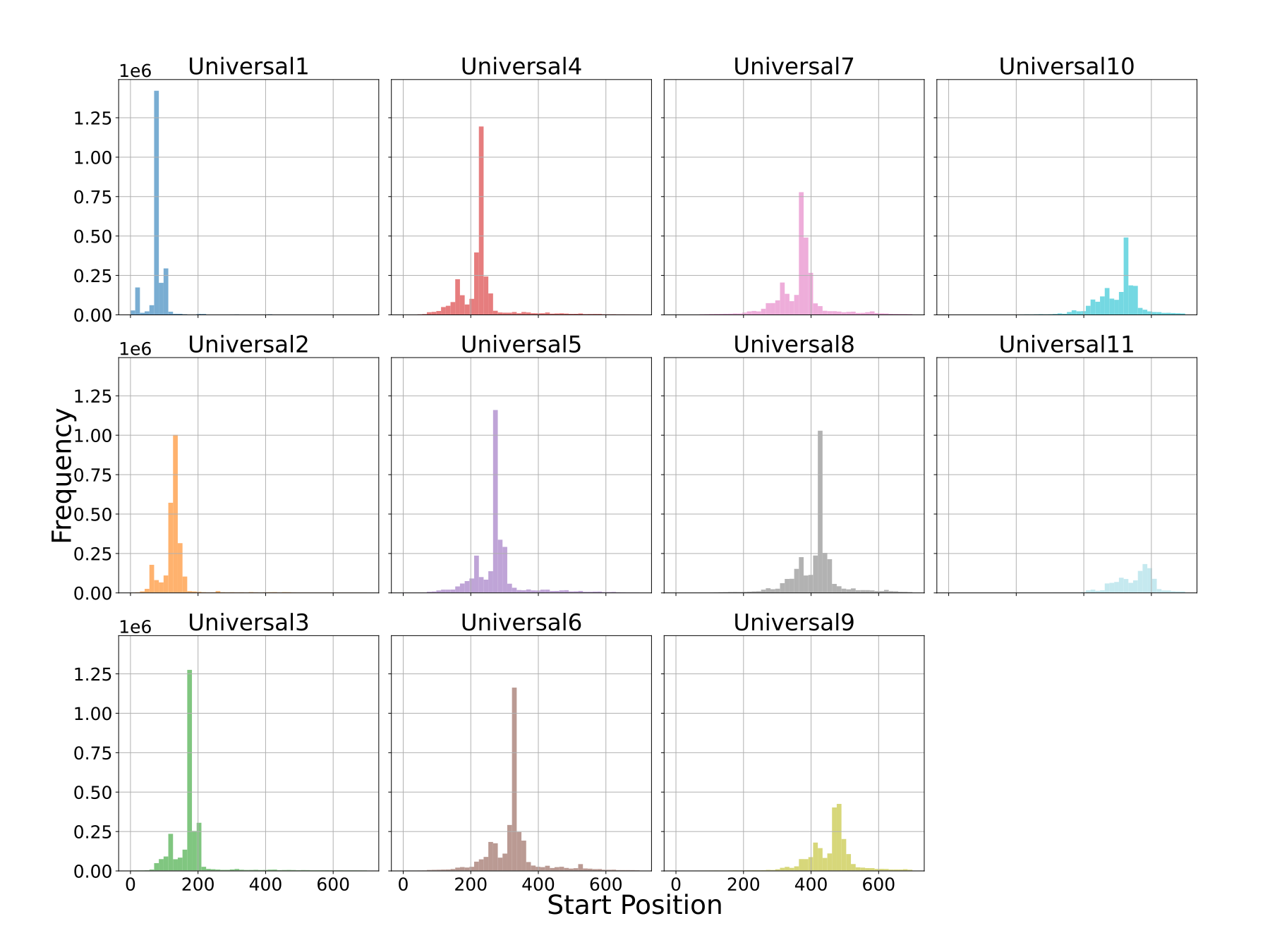}
    \caption{}
    \label{fig:large_scale_experiment_640BC_8cycles-c}
  \end{subfigure}

  \caption{Analysis of data from large-scale experimental proof of concept. (a) Sequence design and potential errors. The top diagram presents a schematic description of the sequence design, including universals and payloads. The bottom two diagrams show two potential erroneous reads, in which the position of the universal sequences and payload varies. The red cross corresponds to missing or unidentifiable payloads.  (b) Read length distribution, (c) All universals start position in the sequences. }
  \label{fig:large_scale_experiment_640BC_8cycles}
\end{figure*}

The analysis of 100\% of the data revealed that 631 sequences were successfully recovered out of a total of 640 sequences. For each of these 631 sequences, all four payload sequences across eight cycles were fully reconstructed.  Our 2D RS error-correcting approach  effectively corrected the errors in the 9 unsuccessful sequences, allowing us to successfully recover all 640 sequences and the original encoded message. 

Although these 9 sequences were ultimately corrected by the 2D RS code, we conducted a deeper analysis of their error profiles to gain a better understanding of the underlying error behavior.
 The primary issue appears to be the low read count in the sequence pool that were barcoded with outer barcode BC2. This limited read coverage significantly impacted the recovery of these 9 sequences. A characterization of the different  error types is described in the histograms presented in Fig. \ref{fig:large_scale_experiment_640BC_8cycles-zoom-in}. In the histogram presented in Fig.~\ref{fig:large_scale_experiment_640BC_8cycles-tie-payload-a}, we can see that in the seventh cycle of inner barcode BC70 we observed more than four payloads, with a tie on the fourth most common payload sequence. We therefore had to select which payload sequence to retain. This approach occasionally resulted in selecting the incorrect set of four payloads, which ultimately led to the wrong letter, $\sigma$, being chosen. In the histogram presented in Fig.~\ref{fig:large_scale_experiment_640BC_8cycles-3-payloads-b}, we can see that in the seventh cycle of inner barcode BC75, only three payload sequences are observed, resulting in an erasure error. As seen in Fig.~\ref{fig:large_scale_experiment_640BC_8cycles-observed-more-incorrecct-payload-c}, the histogram shows that the second cycle of inner barcode BC583, there were three payloads with counts exceeding 75, while the fourth payload has a significantly lower count. Notably, the count for payload 6 (red square) is higher than that of payload 4 (green square), even though payload 4 represents the correct payload. To avoid accidental mixups, we sometime require to observe $t>1$ copies of each payload sequence. This may result in erasure errors where only three payload sequences are observed with enough copies. Choosing the correct threshold $t$ is an important design parameter for the decoding process representing the tradeoffs between different error types. Similar read-count histograms for all 9 unsuccessful sequence is available in Supplementary~\hyperref[sup:Experimental-proof-of-concept-9BC-that-did-not-recover-out-of-the-640BC]{Experimental proof of concept 9BC that did not recover out of the 640BC}.

We next performed a subsampling analysis in which we analyzed the results for a random subset of the sequencing output. We see that, with an average of $\sim100$ reads or less per sequence we encounter similar erasure errors more frequently, see Fig.~\ref{fig:large_scale_experiment_640BC_8cycles-d}.

\begin{figure*}[htbp]
  \centering
  \begin{subfigure}[b]{0.48\linewidth}
    \centering
    \includegraphics[width=\linewidth]{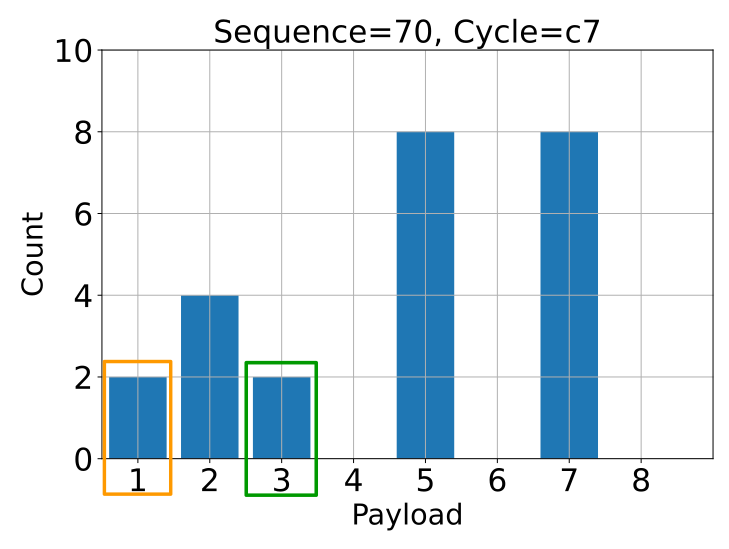}
    \caption{}
    \label{fig:large_scale_experiment_640BC_8cycles-tie-payload-a}
  \end{subfigure}
  \hfill
  \begin{subfigure}[b]{0.48\linewidth}
    \centering
    \includegraphics[width=\linewidth]{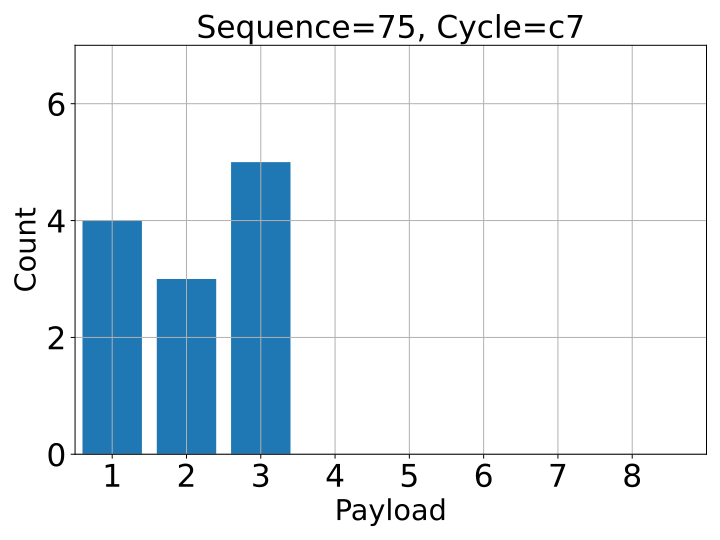}
    \caption{}
    \label{fig:large_scale_experiment_640BC_8cycles-3-payloads-b}
  \end{subfigure}
  
  \vspace{0.5cm} 
  \begin{subfigure}[b]{0.48\linewidth}
    \centering
    \includegraphics[width=\linewidth]{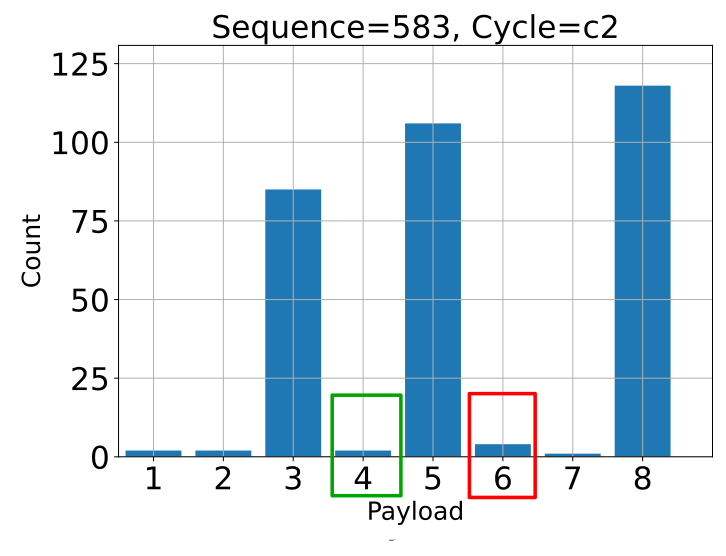}
    \caption{}
    \label{fig:large_scale_experiment_640BC_8cycles-observed-more-incorrecct-payload-c}
  \end{subfigure}
  \hfill
  \begin{subfigure}[b]{0.48\linewidth}
    \centering
    \includegraphics[width=\linewidth]{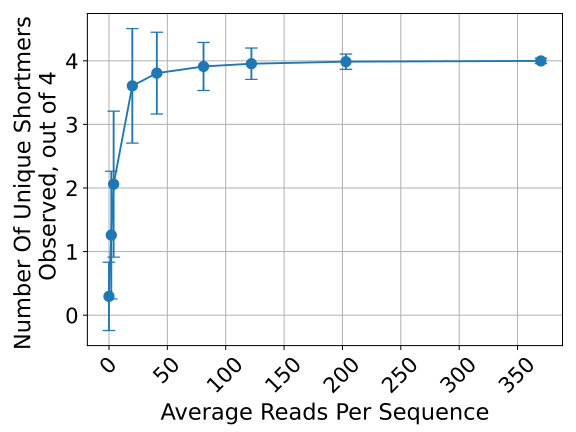}
    \caption{}
    \label{fig:large_scale_experiment_640BC_8cycles-d}
  \end{subfigure}

  \caption{Experiment analysis of our first large-scale datasets. This dataset include 640 combinatorial sequences and the parameters are $N = 8$, $K = 4$, and $M = 8$.  (a) Payload count of sequence BC70, cycle 7, showing a tie where payload 1 (orange square) was incorrectly chosen over payload 3 (green square). (b) Payload count of sequence BC75, cycle 7, illustrating a case where only three payloads were recovered instead of four. (c) Payload count of sequence BC583, cycle 2, showing three payloads with counts exceeding 75 with the forth having very few reads. (d) Subsampling analysis of the sequencing reads. The $X$-axis represents the average number of reads per sequence, and the $Y$-axis shows the number of unique shortmers observed out of 4, average is taken over all barcodes and all cycles. Error bars represent the standard deviation across 10 independent subsampling repeats for each read depth.}
  \label{fig:large_scale_experiment_640BC_8cycles-zoom-in}
\end{figure*}

\subsection*{Error-Correcting Codes for Combinatorial DNA}
\phantomsection
\label{sec:Error-Correcting-Codes-for-Combinatorial-DNA}

Our analysis of previously published datasets and our experimental PoC datasets (see Figure~\ref{fig:HellowWorld}, Figure~\ref{fig:dnastorage} and Figure~\ref{fig:large_scale_experiment_640BC_8cycles-zoom-in}) shows a high occurrence of composite erasures, in which one (or more) of the shortmers composing a combinatorial letter is absent from the sequencing reads. Such an event is more likely to occur when the coverage depth is relatively low. To address these specific errors in combinatorial DNA, based on the construction proposed by Sabary et al.~\cite{sabary2024error}, we propose an error-correction scheme,  which we name as the \emph{Combinatorial VT code}, or shortly \emph{Combi VT code}. A thorough analysis of the redundancy properties and a formal mathematical description of the Combi-VT code construction are provided in Sabary et al.~\cite{sabary2024error}. In that work, the composite asymmetric erasure error model is studied from a coding-theoretic perspective, an extension of Hamming distance is used as a distance metric for such code, and explicit bounds on the redundancy and achievable information rate (information density) are derived. In particular, it is shown that in the regime $N = M$, the redundancy of the proposed construction is near-optimal, differing from the theoretical optimum by an additive term of less than $\log N$. The same work also introduces several generalizations of the basic construction that enable correction of multiple asymmetric erasures within a single combinatorial symbol.  Generally speaking, and as demonstrated both in simulations and in real datasets in the following two sections, assuming similar combinatorial alphabets, since this scheme is specifically designed to address composite asymmetric erasures, it can correct a greater number of such errors while using the same redundancy as the 2D RS code.

In this scheme, sequences over the combinatorial DNA alphabet are modeled as binary matrices, where rows represent the combinatorial letters in their order within the sequence, and columns represent the building block shortmers. Each row contains exactly $K$ ones, indicating the set of $K$ building block shortmers that compose the combinatorial letter (see Fig.~\ref{fig:combEnc}d). In this model,  missing shortmers (composite erasures) are treated as \emph{asymmetric errors}, where bits in the matrix are flipped from $1$ to $0$, but not vice versa (see Fig.~\ref{fig:combDec}b). The scheme leverages tensor-product (TP) codes~\cite{wolf2006introduction} to combine standard erasure and substitution-correcting codes (e.g., Reed-Solomon (RS) codes~\cite{reed1960polynomial}) with Varshamov-Tenengolts (VT) codes~\cite{varshamov1965code}, which are designed to correct asymmetric errors. Using this construction, it is possible to correct typical error patterns involving missing shortmers and to reliably decode information stored in combinatorial DNA molecules.

Our proposed error-correcting code is specifically tailored to handle combinatorial asymmetric errors and is based on the following idea. Since each combinatorial letter is a set of $K$ $k$-mers, the code partitions all possible such sets according to a mathematical expression known as the Varshamov-Tenengolts (VT) syndrome. Each set of $K$ $k$-mers, can be mapped to a length-$N$ binary vector with exactly $K$ ones. The $N$ entries of the vectors corresponds to the $N$ building blocks $k$-mers, and $0$ and $1$ corresponds to the presence or absence (respectively) of a $k$-mers in the set. Thus, the VT syndrome of each set can be uniquely calculated using the formula which is given as follows. Given a length-$N$ binary vector $\mathbf{x} = (x_0, x_1, \ldots, x_{N-1}) \in \{0,1\}^N$, and a positive integer $p$, we define the \emph{VT-syndrome over $p$}~\cite{varshamov1965code} of $\mathbf{x}$, denoted by $\mathbf{s}_{}^p(\mathbf{x})$, as follows 
$\mathbf{s}_{}^p(\mathbf{x}) \triangleq \sum_{i=0}^{N-1} i x_i \bmod p.$ Note that we have that, $\mathbf{s}_{}^p(X) \in \{0, 1, \ldots, p-1 \}$. For the simplicity of our construction, we define $p$ as the smallest prime power, such that $p\ge N$. 

As shown by Varshamov and Tenengolts~\cite{varshamov1965code}, this syndrome enables correction of any single asymmetric error. Thus, as depicted in Fig~\ref{fig:combDec}c and Fig.~\ref{fig:combDec}d, if only $K-1$ out of $K$ $k$-mers are observed for a combinatorial letter, it is possible to identify the missing $k$-mer by comparing the VT syndrome of the observed $K-1$ symbols to the VT syndrome of the original transmitted word and thereby recover the full combinatorial letter. 

In our scheme, the VT syndromes of the combinatorial letters in each sequence are protected against errors. This is done by encoding the vector of VT syndromes using an RS code (see Fig.~\ref{fig:combEnc}b). Thus, in our setting, it is possible to recover up to two VT syndromes per sequence, and then use each recovered syndrome to identify and correct the corresponding missing shortmer. The construction is flexible and can be extended to scenarios with a larger number of errors. In~\cite{sabary2024error}, it is described how to adapt the construction to handle multiple missing shortmers in multiple combinatorial letter by increasing the redundancy of the RS code and using extensions of the VT syndromes, thereby enabling the correction of more than two erroneous letters in each sequence. 

\textbf{Encoding.} Based on this idea, the encoding process begins by mapping most of the user's binary information into valid combinatorial letters, each defined by a set of $k$ shortmers (as described in Fig.~\ref{fig:view of a combinatorial alphabet} and Fig.~\ref{fig:combEnc}a, note that in the latter, the matrix is shown as the transpose of the one in the former). For each such a letter, our encoder calculates its VT syndrome (see Fig.~\ref{fig:combEnc}b). This mapping associates each letter with a unique VT syndrome value and organizes the space into disjoint classes (known as \emph{cosets}) according to the syndrome. The vector of VT syndromes is then encoded using an RS code, which provides protection against errors and allows to recover them in case of errors (see Fig.\ref{fig:combEnc}b). Next, the remaining information bits from the user, together with the RS-encoded VT syndromes, are mapped to additional combinatorial letters (see Fig.\ref{fig:combEnc}c). Finally, the complete combinatorial sequence is sent for synthesis (see Fig.\ref{fig:combEnc}d).

To support both structure and error correction, we define two types of encoding functions. The first maps a binary message directly to a combinatorial letter. The second maps a shorter binary message together with a VT syndrome, which will later allow the decoder to identify and correct missing shortmers. This two-stage design provides a compact representation while ensuring robustness to asymmetric errors. The full encoder of this construction is given in Algorithm 1 of Sabary et al.~\cite{sabary2024error}.

\textbf{Decoding.} As described above, any shortmer that is missing because it was not identified or sequenced appears as an asymmetric error in the matrix representation of the encoded combinatorial sequence. For example, in Fig.\ref{fig:combDec}a, two entries in the matrix with value 1 have flipped to 0.

The decoding process operates in two stages. First, the VT syndromes of the combinatorial letters are computed (see Fig.~\ref{fig:combDec}b). Since each VT syndrome is protected with an RS code, it can be recovered even when shortmers are missing (see Fig.~\ref{fig:combDec}c). Second, the decoder uses the $K-1$ observed shortmers and their associated VT syndrome to recover the missing shortmer, enabling complete and accurate decoding (see Fig.~\ref{fig:combDec}d). Once the matrix is fully recovered, the original binary information can be decoded through the predefined mapping. A full description of the decoding algorithm is provided in Sabary et al.~\cite{sabary2024error}, and an implementation of this decoding is available in our code.

It is important to note that this construction is applicable only under specific code parameters. In particular, it requires the combinatorial letters to be approximately uniformly distributed across the VT syndrome classes. This uniformity ensures that all syndrome classes are populated and usable for encoding, which is essential for enabling decoding from partial observations. A mathematical description of this constraint, along with a proof of the construction, is provided in ~\cite{sabary2024error}. A schematic description of the encoding process is described in Fig.~\ref{fig:V2 RS(7,5) Flowchart - Encoding Process of Combinatoric Payload Erasure}. 

\begin{figure}
    \centering
    \includegraphics[width=\linewidth]{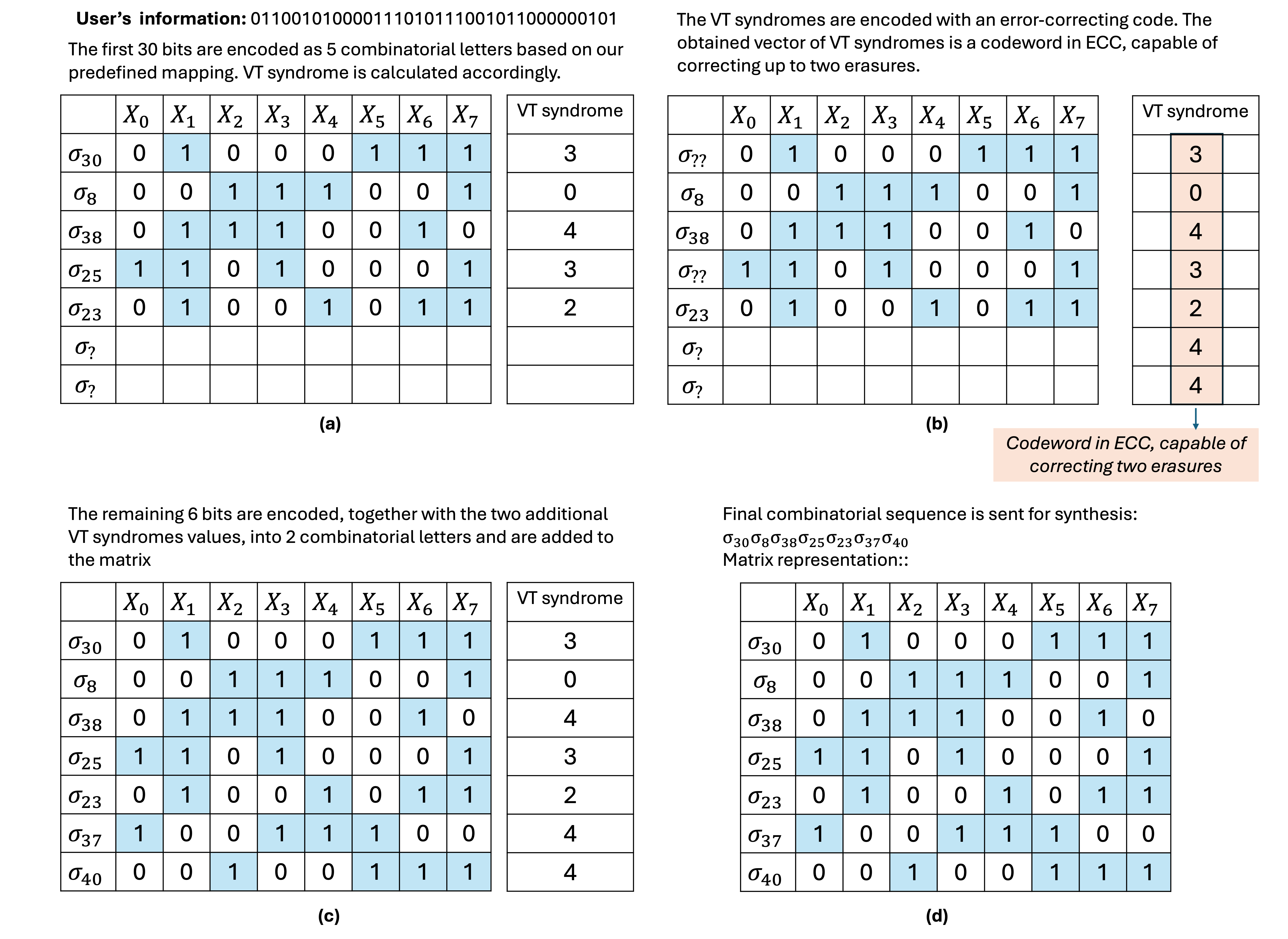}
    \caption{Example of the encoding process for combinatorial sequences using VT syndromes and RS coding, with parameters $N=8$, $K=4$, $M=7$, $p=2^3=8$. (a) The user’s binary information is first mapped into valid combinatorial letters, each represented as a column in the matrix and defined by a set of  $K=4$ shortmers (blue squares with value 1).(b) For each combinatorial letter, the VT syndrome is computed. The resulting sequence of syndromes forms a codeword in a  Reed–Solomon (RS) error-correcting code, namely RS$(7,5)$ over $GF(2^3)$, this encoding can correct up to two erasures. (c) The RS-encoded VT syndromes are mapped into additional combinatorial letters and appended to the original information-carrying letters, forming the complete encoded matrix. (d) The final combinatorial sequence is obtained from the complete matrix and prepared for synthesis. For more details, analysis, and generalizations of this code construction, the reader is referred to Sabary et al.~\cite{sabary2024error}.}
    \label{fig:combEnc}
\end{figure}

\begin{figure}
    \centering
    \includegraphics[width=\linewidth]{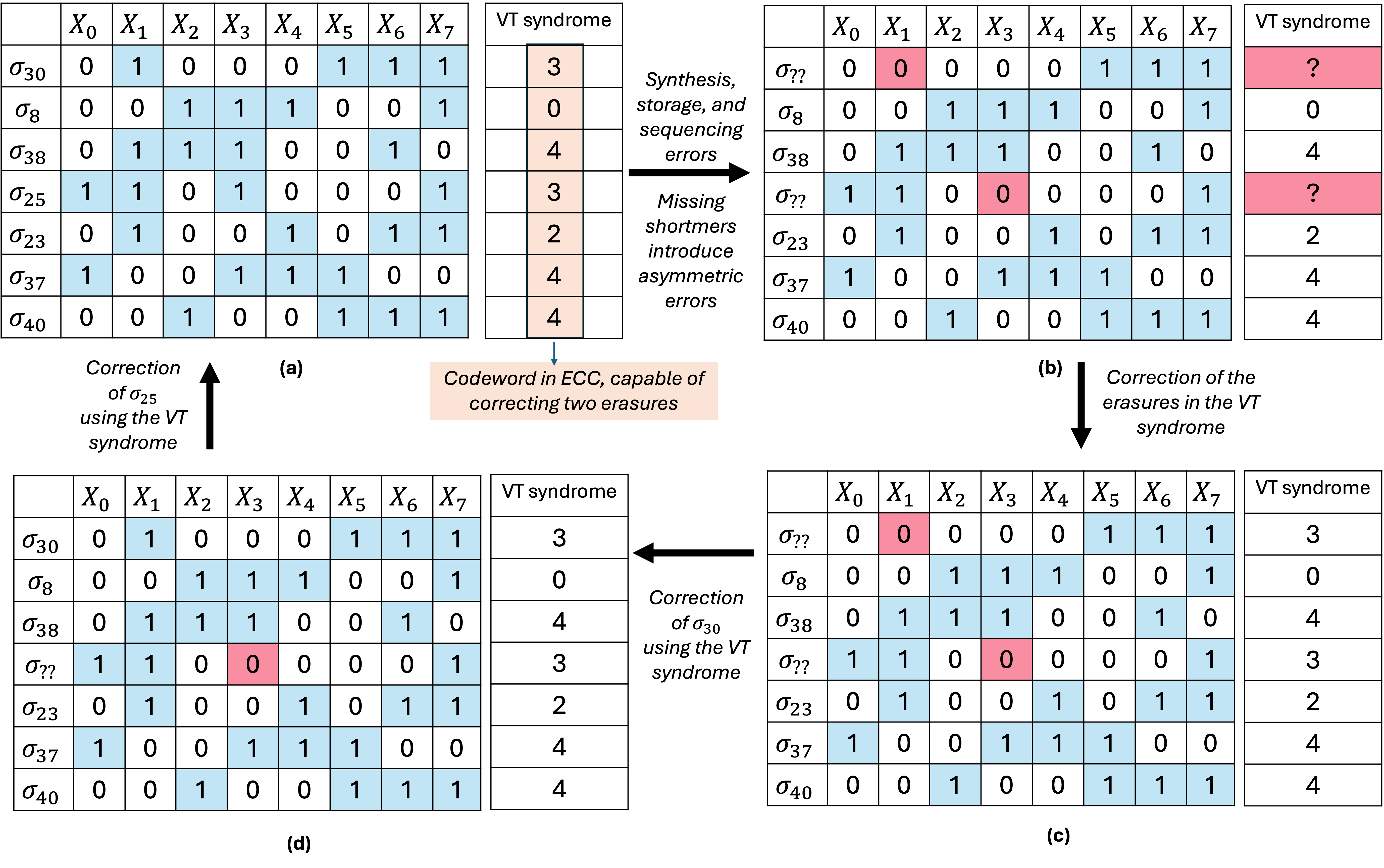}
    \caption{Example of the decoding process for combinatorial sequences using VT syndromes and RS coding, with parameters $N=8$, $K=4$, $M=7$, $p=2^3=8$. (a) Matrix representation of the received combinatorial sequence after sequencing. Missing shortmers, shown as red 0s, correspond to asymmetric errors (1s flipped to 0s). (b) For each combinatorial letter (matrix column), the VT syndrome is computed from the observed shortmers. Missing shortmers cause the corresponding syndrome values to be unknown (marked with “?”). The sequence of VT syndromes was designed to be a codeword in a RS error-correcting code, and is capable correcting two erasures. (c) RS decoding is applied on  to recover the missing VT syndromes, even when shortmers are absent. (d) Using each recovered VT syndrome together with the correspond $k-1$ observed shortmers in its row, the decoder identifies and completes the missing shortmers, and thus the matrix is recovered.}
    \label{fig:combDec}
\end{figure}



\begin{figure}[ht]
\centering
\includegraphics[width=0.5\textwidth]{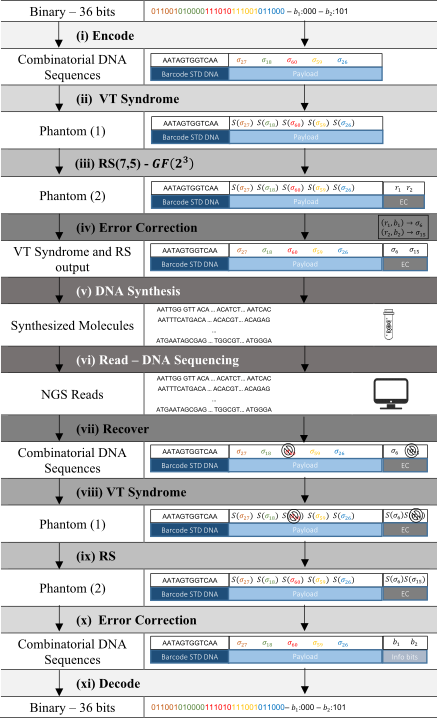}
\caption{End-to-end workflow of a combinatorial DNA-based storage system. A binary message is broken into chunks of information bits, these chunks are barcoded, and encoded into a combinatorial alphabet, while some of the infromation bits in each chunk are designated for redundancy \(b_i\) (i). On each chunk, VT syndrome calculations are performed on each combinatorial letter (ii). An RS encoding \(r_i\) is added to each VT syndrome output chunk (used as the syndrome) (iii). The designated information bits \(b_i\) are added to every \(r_i\) in each chunk (used as the syndrome index), and these (\(r_i\),\(b_i\)) bits are then converted into combinatorial letters (iv). The combinatorial message is synthesized using combinatorial shormer synthesis (v), and the DNA is sequenced (vi). Next, the combinatorial letters are reconstructed (vii), and the VT syndrome is performed on the combinatorial letters (viii). Using RS on each chunk, the letters are decoded (ix), and finally, the combionatorial letters and  information bits are recovered (x), followed by translating the infromation back into the original binary message (xi).}
\label{fig:V2 RS(7,5) Flowchart - Encoding Process of Combinatoric Payload Erasure}
\end{figure}

We note that a direct comparison between the proposed Combi-VT coding scheme and previously published DNA-based data storage coding schemes is not straightforward. First, combinatorial DNA relies on a fundamentally different data-representation paradigm compared to conventional base-by-base DNA synthesis. Second, the underlying error models differ substantially: while most prior works focus on substitution, insertion, and deletion errors at the nucleotide level, the present work addresses asymmetric erasures at the shortmer level, which arise naturally in combinatorial synthesis. For example, the DNA Fountain scheme~\cite{erlich2017dna} assumes that a sufficient fraction of sequencing reads are effectively noise-free, whereas Antkowiak et al.~\cite{antkowiak2020low} proposed a Reed–Solomon–based approach designed to tolerate higher synthesis error rates in low-cost manufacturing settings. Chandak et al.~\cite{chandak2019improved} introduced an LDPC-based coding framework integrated directly with the Nanopore basecaller, targeting a different error profile and system architecture. Combinatorial synthesis is a relatively recent approach to DNA data representation and exhibits a distinct and asymmetric error structure compared to prior synthesis technologies. Accordingly, we do not claim that the proposed coding scheme is universally superior to existing DNA storage codes. Rather, our objective is to design and evaluate a coding strategy that is well-matched to the dominant error mechanisms inherent to combinatorial DNA synthesis. 
\subsection*{Comparison of the Combinatorial VT Code and the 2D RS Code Using Simulations}
\phantomsection
\label{Simulation-of-the-New-Error-Correction-vs-2D-RS}

We compared the error correction capabilities of the combi VT scheme and the 2D RS scheme using an end-to-end simulation, similar to the one used in Preuss \textit{et al.}~\cite{preuss2024efficient}. Briefly, 10KB messages were encoded using a combinatorial alphabet with $N=8$ and $K=4$ corresponding to 6 bits per symbol. Each message was encoded using 2,432 combinatorial sequences of length 7 (including EC redundancy symbols and sequences). The parameters of the two inner EC codes used the same redundancy level of 6 redundancy bits for every 30 bits of information. The 2D RS EC used an RS(7,6) inner code in each sequence, and the combi VT code uses an RS(7,5) EC to encode the VT syndromes vector. The outer EC code used RS(32,30), adding two redundancy sequences for each 30-sequence block. A simulation of combinatorial synthesis, molecule sampling and sequencing was then done on the combinatorial sequences using tunable parameters for insertion, deletion, and substitution error rates and for the sampling rate (see Methods~\hyperref[Methods:Simulation-of-combinatorial-DNA-based-storage-system]{Simulation of a combinatorial DNA-based storage system}). The results of the simulation runs are summarized in Fig.~\ref{fig:Simulation-of-an-end-to-end-combinatorial-shortmer-encoding}. Fig.~\ref{fig:Simulation-of-an-end-to-end-combinatorial-shortmer-encoding-a} depicts a schematic representation of the simulation workflow and indicates the reconstruction rates and Levenshtein distance calculations throughout the process .
 Each run included 30 repeats with random input texts of 10 KB. In Fig.\ref{fig:Simulation-of-an-end-to-end-combinatorial-shortmer-encoding-b}, an average of 20 reads per sequence were sampled, demonstrating the effects of both error correction codes. The results clearly show that the combi VT code achieves better reconstruction outcomes compared to the 2D RS method~\cite{preuss2024efficient} with a significant average improvements of 0.012 (P-value = 3.82e-11,two-sided t-test), and 0.005 (P-value = 7.53e-8,two-sided t-test) in the normalized Byte-level Levenshtein distance score for After inner EC and After outer RS, respectively. Furthermore, as shown in Fig.\ref{fig:Simulation-of-an-end-to-end-combinatorial-shortmer-encoding-c}, when using 10, 15, 20, and 30 sampled reads per sequence, the new error correction code for the combinatorial encoding scheme\cite{sabary2024error} consistently provides better results and more accurate data reconstruction. Notably, the sampling rate is a dominant factor affecting the reconstruction rate. Even with no synthesis and sequencing errors, low sampling rates yield poor results (Fig.~\ref{fig:Simulation-of-an-end-to-end-combinatorial-shortmer-encoding-c}) that the error correction cannot overcome. The effect of substitution errors on overall performance is comparatively smaller and easier to detect and correct. This is because substitution errors occur at the nucleotide level while insertion and deletion errors occur at the $k$-mer level. The minimal Hamming distance $d=2$ of the trimer set $\Omega$ allows for the correction of single-base substitutions.

\begin{figure*}[htbp]
  \centering
  \begin{subfigure}[b]{0.48\linewidth}
    \centering
    \includegraphics[width=\linewidth]{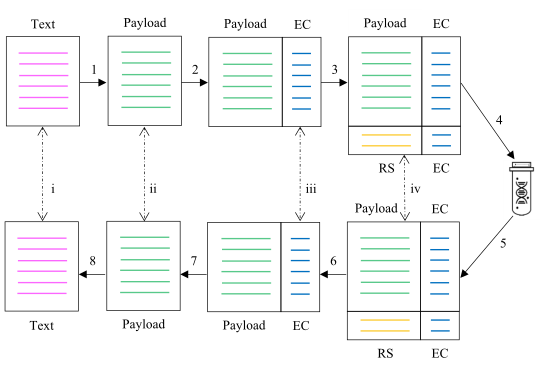}
    \caption{}
    \label{fig:Simulation-of-an-end-to-end-combinatorial-shortmer-encoding-a}
  \end{subfigure}
  \hfill
  \begin{subfigure}[b]{0.48\linewidth}
    \centering
    \includegraphics[width=\linewidth]{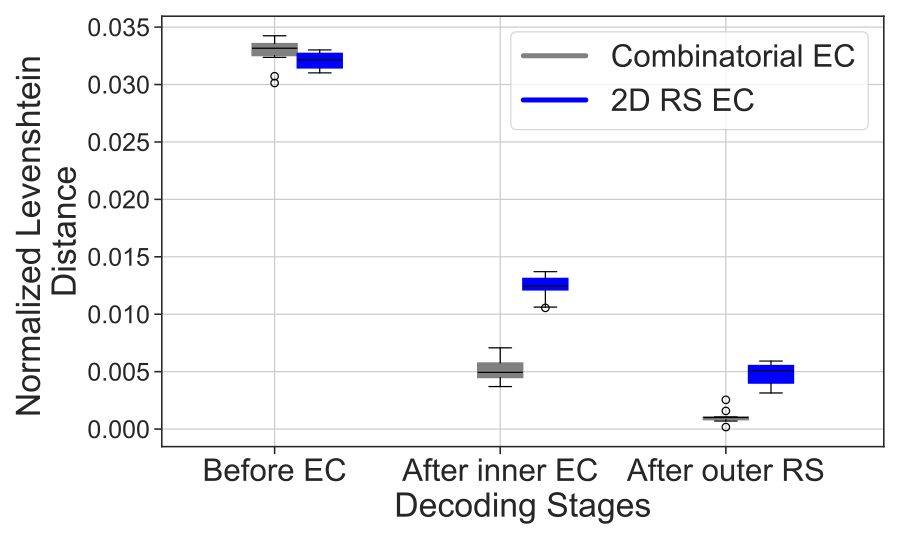}
    
    \caption{}
    \label{fig:Simulation-of-an-end-to-end-combinatorial-shortmer-encoding-b}
  \end{subfigure}
  
  \vspace{0.5cm} 
  \begin{subfigure}[b]{0.80\linewidth}
    \centering
    \includegraphics[width=\linewidth]{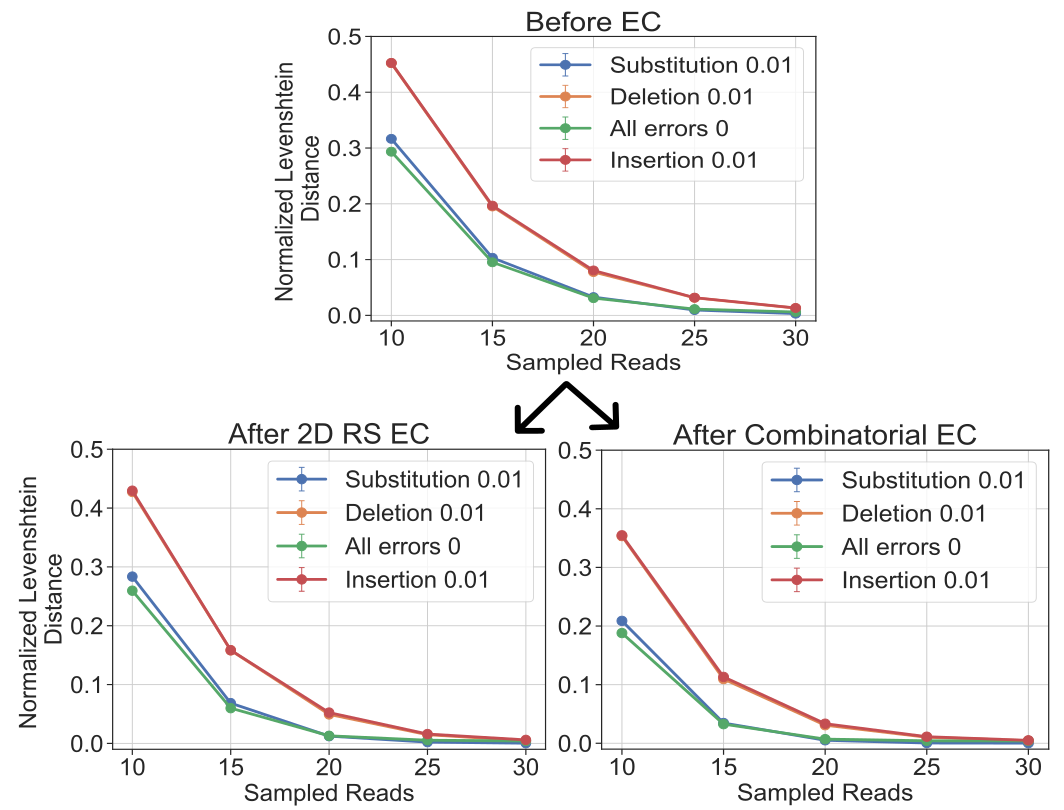}
    \caption{}
    \label{fig:Simulation-of-an-end-to-end-combinatorial-shortmer-encoding-c}
  \end{subfigure}

  \caption{Comparison of error correction schemes using end-to-end simulations. (a) A schematic view of the simulation workflow including: encoding into a combinatorial message (1), Inner Error correction encoding (2), Outer Error correction encoding (3), DNA synthesis and sequencing simulation (4), combinatorial letter reconstruction (5) Outer Error correction decoding (6). Inner Error correction decoding (7), and decoding of the original text message (8). The Roman numerals (i-iv) represent the different performance evaluation steps. (b) Error rates observed in different stages of the decoding process. Boxplot (representing 30 simulations) of the normalized Byte-level Levenshtein distance (See Methods~\hyperref[Methods:Simulation-Reconstruction]{Simulation Reconstruction}) in the different stages of the simulation. Simulation was done with a coverage of 20 reads per barcode, with a substitution error rate of 0.01 for Combi VT code (grey) and 2D RS EC (blue). (c) Sampling rate effect on overall performance. Average normalized Byte-level Levenshtein distance (over 30 simulations) as a function of sampling rate before EC decoding, and after decoding (ii). Different lines represent different error types (substitution, deletion, and insertion) introduced at a rate of 0.01 and 0.}
  \label{fig:Simulation-of-an-end-to-end-combinatorial-shortmer-encoding}
\end{figure*}

\subsection*{Comparison of the Combinatorial VT Code and the 2D RS Using Actual Experimental Data} \phantomsection
\label{Experiment of 320BC 2D RS vs 320BC new error correction}

To evaluate the effectiveness of our proposed error correction scheme, we conducted a second large-scale experiment using the combinatorial Bridge oligonucleotide assembly by Helixworks Technologies Ltd. The experiment compared the performance of the traditional 2D RS-based approach with our newly designed combinatorial error correction method. The experimental parameters where slightly modified to allow a fair comparison of the two approaches using identical redundancy levels. Briefly, we encoded and decoded a 2.66KB input message using 640 combinatorial sequences, with $M=7$ cycles, $N=8$ and $K=4$ (see Table~\ref{tab:Experiment parameters 320bc}). see Section \hyperref[sec:Experimental-Proof-of-Concept]{First Experimental Proof of Concept 640 Sequences and 8 Cycles} for more details.



The sequencing output 3,615,438 contained
 reads (see Table~\ref{tab:Summary of sequencing reads analyzed in the study 320bc}) out of which 1,590,536 reads were of length more than 450bp (average = 420, median = 419, sd = 141, see Fig.~\ref{fig:large_scale_experiment_640BC_7cycles-a}). The average number of reads per sequence over all the 640 sequences is 621. (See Table \ref{tab:Summary of average length of outer BC 320bc} and Fig.\ref{fig:large_scale_experiment_640BC_7cycles-c}).

\begin{figure*}[htbp]
 
  \begin{subfigure}[b]{0.43\linewidth}
    \centering
    \includegraphics[width=\linewidth]{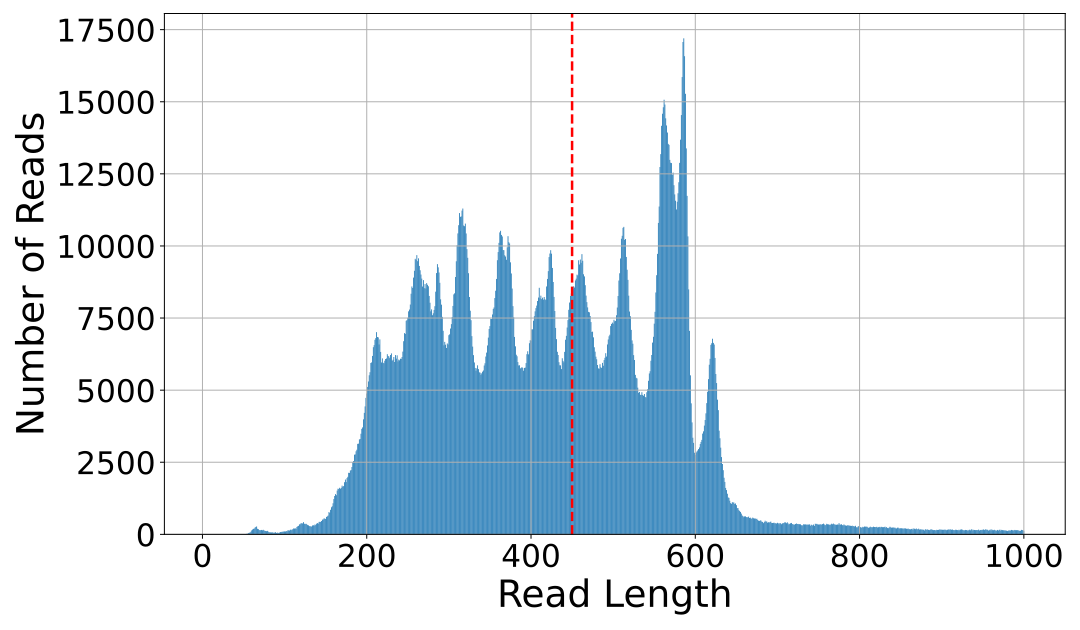}
    \caption{}
    \label{fig:large_scale_experiment_640BC_7cycles-a}
  \end{subfigure}
  \hfill
  \begin{subfigure}[b]{0.56\linewidth}
    \centering
    \includegraphics[width=\linewidth]{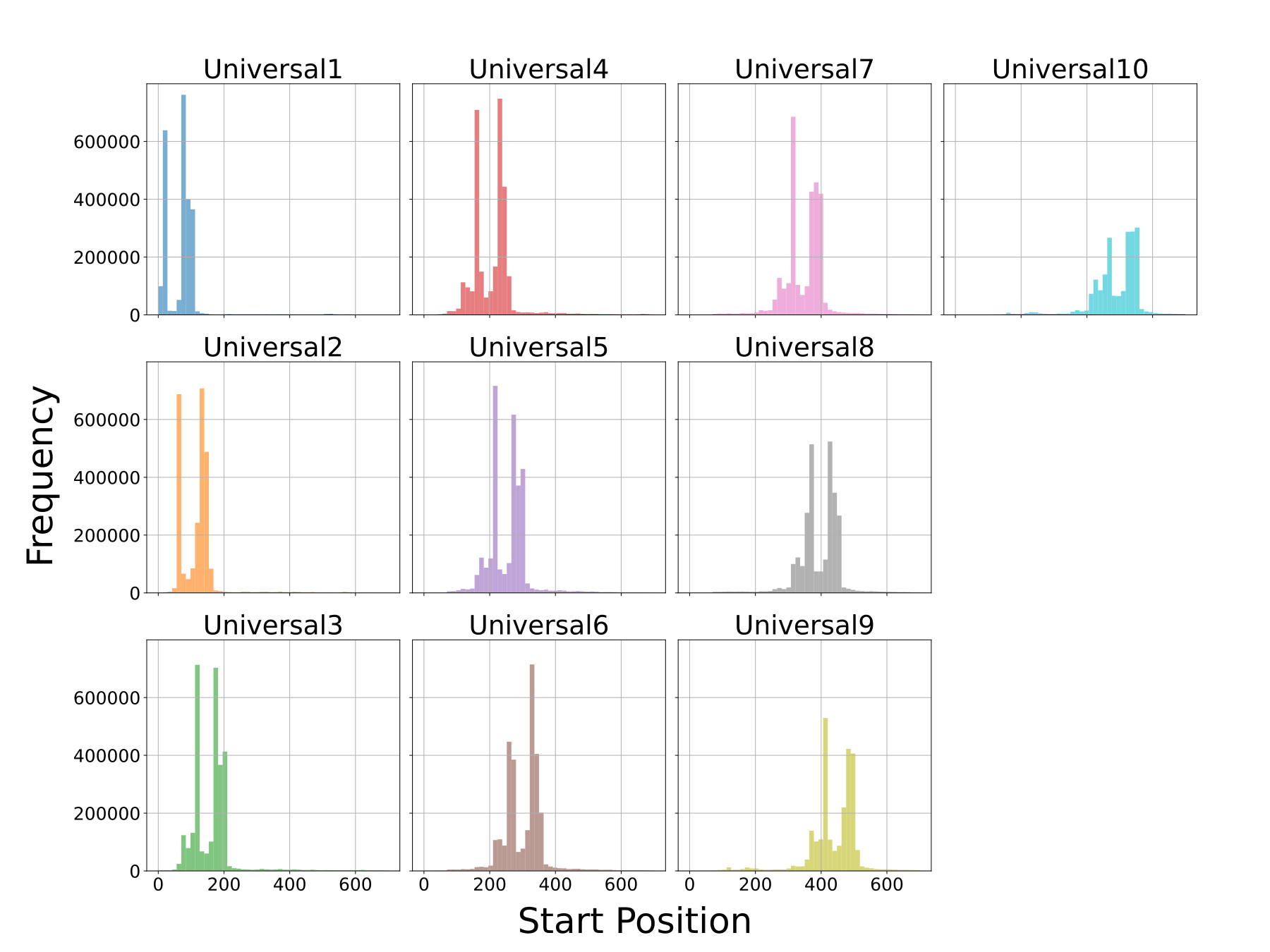}
    \caption{}
    \label{fig:large_scale_experiment_640BC_7cycles-b}
  \end{subfigure}

    \begin{subfigure}[b]{0.43\linewidth}
        \centering
        \includegraphics[width=\linewidth]{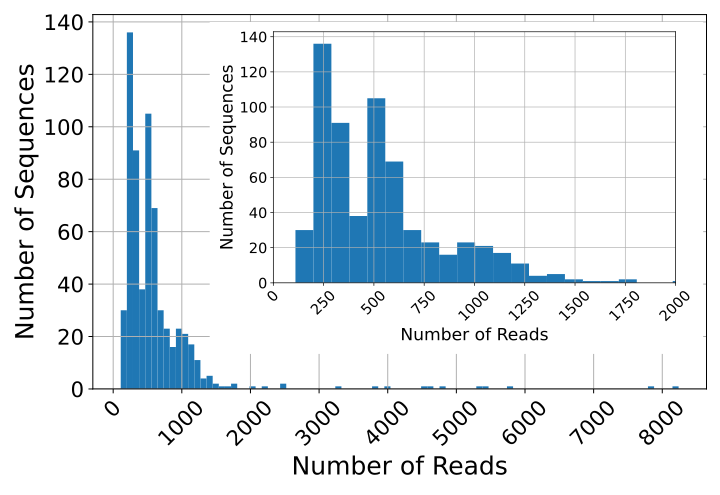}
        \caption{}
        \label{fig:large_scale_experiment_640BC_7cycles-c}
    \end{subfigure}   
    \hfill
    \begin{subfigure}[b]{0.48\linewidth}
        \centering
        \includegraphics[width=\linewidth]{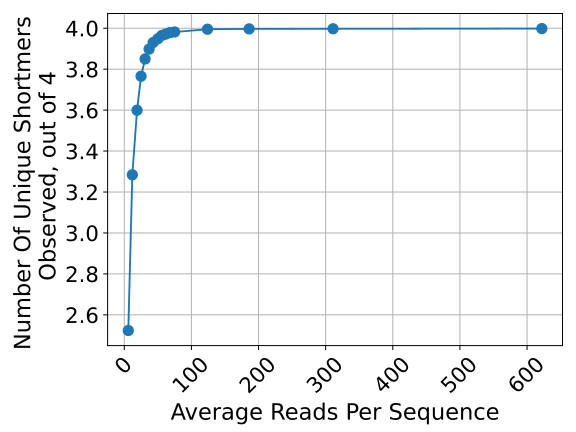}
        \caption{}
        \label{fig:large_scale_experiment_640BC_7cycles-d}
    \end{subfigure}

    \begin{subfigure}[b]{0.43\linewidth}
        \centering
        \includegraphics[width=\linewidth]{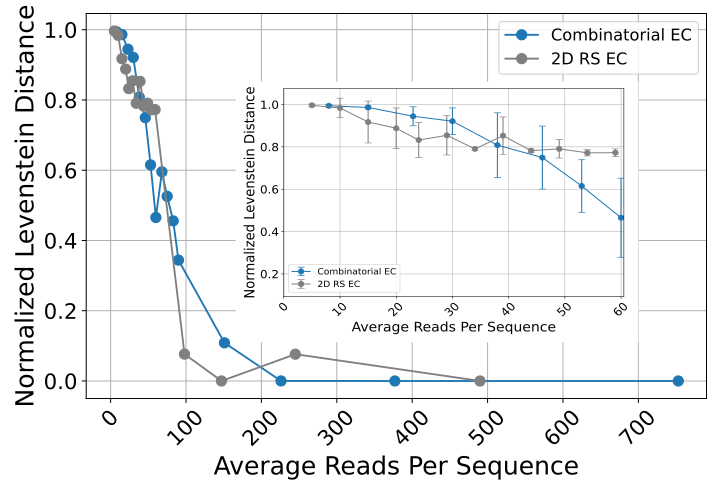}
        \caption{}
        \label{fig:large_scale_experiment_640BC_7cycles-e}
    \end{subfigure}   
    \hfill
    \begin{subfigure}[b]{0.48\linewidth}
        \centering
        \includegraphics[width=\linewidth]{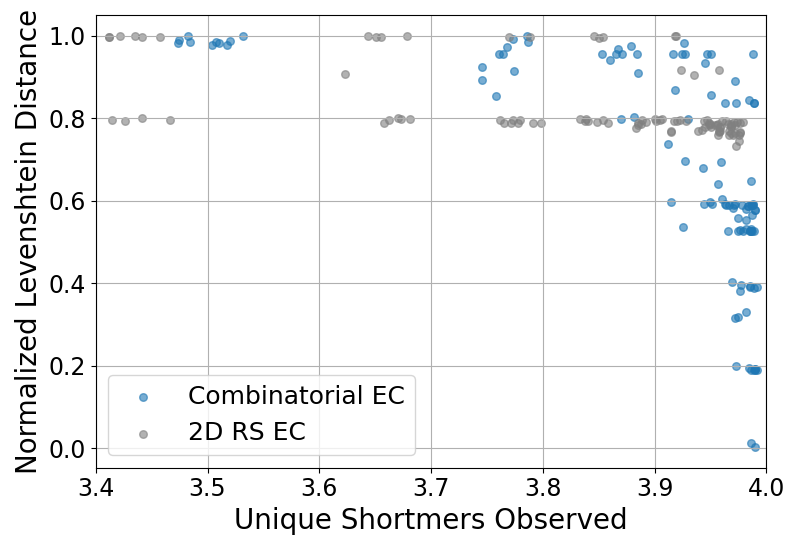}
        \caption{}
        \label{fig:large_scale_experiment_640BC_7cycles-f}
    \end{subfigure}

  \caption{Experiment analysis. (a) Read length distribution, (b) All universals start position in the sequences. (c) Read coverage distribution, with the x-axis representing the number of reads and the y-axis representing the number of barcodes. (d) X-axis is the average read per sequence, and the y-axis is the number of observed unique shortmers out of the 8, with $N=8$, $K=4$, $M=7$, 640 different sequences. (e) Comparison of decoding performance analysis, with the x-axis showing the average reads per sequence and the y-axis representing normalized byte-level Levenshtein distance. (f) Comparison of error correction performance, with the x-axis representing the number of unique shortmers observed and the y-axis showing normalized byte-level Levenshtein distance.}
  \label{fig:large_scale_experiment_640BC_7cycles 320bc}
\end{figure*}

Similar to the POC experiment, the analyzed sequences contained gaps of various lengths with unidentifiable sequences as illustrated in Fig.~\ref{fig:large_scale_experiment_640BC_7cycles-a} and Fig.~\ref{fig:large_scale_experiment_640BC_7cycles-b}. Therefore in this experiment we also used the same BLAST-based decoding algorithm described in Methods~\hyperref[Methods:640BC-8cycles]{BLAST-based analysis of large-scale combinatorial experiment}. The large-scale nature of the experiment resulted in coverage variability, with some sequences receiving high coverage while others remained sparse (See Fig.~\ref{fig:large_scale_experiment_640BC_7cycles-c}), directly impacting decoding accuracy.

To compare the two error-correction approaches, we split the 640 BCs in this experiment into two groups: BCs 1–320 were encoded using the traditional 2D RS-based method, whereas BCs 321–640 were encoded with our newly designed combinatorial error-correcting approach, serving as the inner code while sharing the same outer code as the former. In the combinatorial error-correcting scheme, we employed an RS(7,5) code to protect the VT syndromes. This enables the scheme to correct up to two missing shortmers (composite erasures) in each sequence. In the traditional 2D RS-based approach, we utilize an RS(7,6) code for inner error correction. Consequently, in both schemes, each sequence contains 36 bits of information and 6 bits of redundancy used for error-correction, meaning that both schemes use the same redundancy, allowing us to focus our analysis only on the error correction capabilities. Additional, both methods incorporated an RS(32,30) outer code to enhance overall reliability.  Thus, as each scheme had 300 sequences of information in total, the total encoded information was 10,640 bits per experiment.
Our analysis showed that down-sampling leads to an increase in the prevalence of erasure errors Fig.~\ref{fig:large_scale_experiment_640BC_7cycles-d}. Specifically in the critical region of low reads coverage where the average read count per sequence is below 50. This region also exhibit an increase decoding failure rate characterized by high Levenshtein distance (Fig.~\ref{fig:large_scale_experiment_640BC_7cycles-e}). Zooming in on the region with a high erasure rate reveals that, although both schemes employ the same redundancy, the Combi-VT code exhibits a slight advantage over the traditional 2D RS (Fig.~\ref{fig:large_scale_experiment_640BC_7cycles-e}, inset). When focusing the analysis to sequences where erasure errors where detected (regardless of sequencing coverage), we see that the suggested error correction clearly outperforms the traditional approach (Fig.~\ref{fig:large_scale_experiment_640BC_7cycles-f}).
Nonetheless, the presence of other error types, such as substitutions and insertions significantly impact the decoding rate, sometimes shadowing the effect of the different error correction schemes. It is thus clear that any error correction scheme for combinatorial DNA-based data storage systems must take an holistic approach that can effectively address all possible error types.

\begin{table}[ht]
\centering
\begin{tabular}{|l|l|}
\hline
\textbf{Parameters}& \textbf{Value} \\ 
\hline
Text size& 2.66KB\\ 
\hline
Number of total BC& 640\\ 
\hline
Number of BC for 2D RS& 320\\ 
\hline
Number of BC for new EC& 320\\ 
\hline
Number inner BC& 64\\ 
\hline
Number outer BC& 10\\ 
\hline
Sequence length/number of cycles (M)& 7\\ 
\hline
Redundancy bit & 6bit\\
\hline
Error correction method inner code & (7,6)\\ 
\hline
RS outer code& (32,30)\\ 
\hline
Oligo length (nuc)& 475 bp\\ 
\hline
BC length inner& 75 bp\\ 
\hline
BC lenth outer, Payload, Uni length& 25 bp\\ 
\hline
Number of unique k-mers (N)& 8\\ 
\hline
 Number of unique k-mers in each letter (K)&4\\\hline
 Logical density (bits/ Synthesis Cycle)&6\\\hline
 Alphabet size&$\binom{8}{4} = 70 \approx 64 = 2^6$\\\hline
\end{tabular}
\caption{Design parameters of the second large-scale experiment.}
\label{tab:Experiment parameters 320bc}
\end{table}

\begin{table}[ht]
\centering
\begin{tabular}{|l|l|}
\hline
\textbf{Condition} & \textbf{Value} \\ 
\hline
Number of reads & 3,615,438\\ \hline 
 Number of reads with 450bp+&1,590,536\\ \hline 
 Number of reads with BC identified&397,806\\

Expected BC& 640\\ 
\hline
Recoverd BC& 640\\ 
\hline
Recoverd BC before error correction& 632\\ 
\hline
BC not fully recovered& 8\\ \hline
\end{tabular}
\caption{Summary of the counts of sequencing reads analyzed in the second large-scale experiment.}
\label{tab:Summary of sequencing reads analyzed in the study 320bc}
\end{table}

\begin{table}[ht]
\centering
\begin{tabular}{|l|l|l|}
\hline
Outer BC& Inner BC &Average number of reads\\ 
\hline
1& 1-64&897\\ \hline 
 2&65-128&769\\ \hline 
 3&129-192&808\\ \hline 
4& 193-256&680\\ 
\hline
5& 257-320&610\\ 
\hline
6& 321-384&1103\\ 
\hline
7& 385-448&212\\ \hline
 8& 449-512&257\\\hline
 9& 513-576&292\\\hline
 10& 577-640&584\\\hline
\end{tabular}
\caption{Summary of the average number of reads per outer BC in the second large-scale experiment. Each outer BC was conducted in a different well.}
\label{tab:Summary of average length of outer BC 320bc}
\end{table}

\section*{Discussion}
In this study, we investigated the occurrence and correction of erasure errors in combinatorial DNA-based data storage, focusing on their prevalence during down-sampling. In the analysis of two previously published datasets, these errors were shown to become more pronounced as sequencing depth decreased, a characteristic feature of combinatorial DNA-based data storage. While these erasure errors can result from underlying base-level insertions, deletions, or substitutions that cause certain shortmers to become unidentifiable, our analysis indicates that they also manifest as higher-level phenomena reflecting the nature of the combinatorial synthesis. We further validated these observations using a large-scale experimental proof-of-concept, confirming that erasure errors are a dominant issue, particularly under low sequencing coverage. To address these errors, we developed an asymmetric error-correcting code specifically designed to mitigate their impact. Our method consistently outperformed traditional symmetric error-correcting codes, including 2D Reed-Solomon (RS), especially in scenarios characterized by high erasure error rates.

In comparison to previous studies, which primarily focused on symmetric error-correcting codes or sequence-level corrections, our approach directly targets the asymmetric nature of errors in combinatorial DNA. K-mer-level error correction therefore serves as a complementary mechanism to base-level correction, addressing asymmetric errors inherent to combinatorial DNA representations. This is an important advancement, as it acknowledges the unique error profile of combinatorial DNA, where the reconstruction of specific combinatorial letters is compromised due to the loss of included k-mers. By tailoring our method to this problem, we achieved significant improvements in decoding accuracy. It is especially interesting to compare the method proposed in this work to the one presented in previous large-scale combinatorial work. The most suitable for comparison is the combination of fountain code and forward code used in \cite{anavy2019data}. The two approach differ significantly, specifically in low coverage regimes. While our proposed code focuses on the asymmetric erasure errors that dominate low coverage scenarios, using fountain code in such cases results in complete data loss as was clearly demonstrated in Anavy \textit{et. al}. This further support the use of codes specifically tailored to the expected error patterns.

Our analysis revealed that down-sampling consistently led to an increase in erasure errors, particularly when the average coverage dropped below 50 reads per sequence. This critical threshold was identified in both experiments, where low sequencing depth directly impacted decoding accuracy. Additionally, the large-scale nature of our experiments introduced coverage variability, with some sequences receiving high read coverage while others remained sparse. This variability directly impacted decoding performance, emphasizing the need for error correction methods that can adapt to uneven coverage distributions. To accurately identify k-mers within each read, even in cases where sequencing errors or gaps disrupted the expected sequence structure, we employed a BLAST-based approach for sequence analysis. This choice was driven by the design of the combinatorial sequences, where each letter is represented by a set of predefined k-mers. The BLAST-based method ensured robust detection of these k-mers, even under challenging conditions, making it essential for maintaining decoding accuracy, especially in low-coverage scenarios.

We observed that our new error correction method demonstrated superior performance over the 2D RS approach, both in simulations and in the large-scale experiment, particularly in scenarios dominated by erasure errors. This performance difference can be attributed to the targeted design of our method, which leverages tensor-product (TP) codes to integrate standard erasure and substitution-correcting codes (such as Reed-Solomon (RS) codes) with Varshamov-Tenengolts (VT) codes. This design effectively addresses the asymmetric nature of errors in combinatorial DNA, where missing k-mers directly impact decoding success. As seen in the second experiment, the new error correction method proved particularly effective in recovering sequences from pools with low read counts, where traditional 2D RS failed to fully decode the data.

However, our study primarily focused on erasure errors, with other error types (e.g., substitutions, insertions, and deletions) not explicitly addressed. Our analysis showed that despite the focus on erasure errors, substitutions and insertions also affected decoding accuracy, highlighting the need for a unified error correction scheme that accounts for all possible error types. This is particularly important given the variability in sequencing quality and coverage observed in our experiments.

Overall, our findings emphasize the importance of developing unique error correction schemes for combinatorial DNA storage, particularly when the sequencing depth is limited. The adaptability of our asymmetric error-correcting code makes it suitable for a wide range of applications, beyond the datasets we analyzed. Moreover, the principles demonstrated here can be extended to other DNA-based data storage systems facing similar asymmetric error challenges.

In conclusion, our study provides a detailed understanding of erasure errors in combinatorial DNA storage and offers a practical solution for their correction. This advancement represents a significant step toward improving the reliability of DNA-based data storage systems. Our findings also raise important questions regarding the optimal integration of traditional error-correction schemes with specialized approaches tailored to combinatorial DNA. Further research is needed to develop unified error correction strategies that address the full spectrum of errors encountered in large-scale DNA storage systems. Future studies could focus on optimizing the current asymmetric error-correcting code to further enhance performance, while also exploring the impact of other error types (such as substitutions, insertions, and deletions) on decoding accuracy in combinatorial DNA. Moreover, developing alternative methods for k-mer detection beyond BLAST, capable of faster and more scalable analysis of large-scale combinatorial datasets, could significantly enhance decoding efficiency. Finally, optimizing the design of combinatorial sequences to improve k-mer detection and minimize sequencing errors could further boost system robustness.
From a broader perspective, as the technology matures, it will be important to establish more general evaluation metrics for DNA data storage systems. Such metrics would enable meaningful comparisons between combinatorial DNA synthesis and base-level synthesis approaches, which currently operate under fundamentally different data representations and error models. These metrics should account not only for error-correction performance, but also for system-level considerations such as writing time, cost, required sequencing coverage, and achievable information density.

\section*{Methods}

\subsection*{BLAST-based analysis of the data from Yan \textit{et al.}\cite{yan2023scaling}}
\phantomsection
\label{Methods:Data-Analysis-HelloWorld-1-cycles}

The 102,222 reads from the sequencing results were analyzed using the following procedure.
\begin{enumerate}
    \item \textbf{Read length filtering.} Reads where filtered to retain only those with length greater than a predefined threshold of 50 nt. This resulted in 22,477 reads.
    \item \textbf{BLAST alignment.} Each read was aligned against the design sequence using BLAST
    \begin{itemize}
        \item Generate a BLAST DB of the universal sequences
        
        \begin{lstlisting}
        > makeblastdb -in <universals.fasta> -dbtype nucl -out <blast_db>
        \end{lstlisting}
        \item Run blast with tolerance for small mismatches 
        and gaps to accommodate sequencing errors.        
        
        \begin{lstlisting}
        > blastn -query output_fasta_file -db <blast_db> -outfmt "6 qseqid sseqid pident length mismatch gapopen qstart qend sstart send evalue bitscore btop qseq sseq" -word_size 7 -gapopen 5 -gapextend 2 -reward 1 -penalty -3 -min_raw_gapped_score 18 -num_threads 6
        \end{lstlisting}

    \end{itemize}
    \item \textbf{Sequence structure extraction.} The table of the BLAST results was analyzed. For each read, the sequences directly left and right to the identified U1 sequence were compared to the barcode sequences and the payload sequences respectively. The comparison was based on Levenshtein distance and the nearest neighbor sequence was selected as long as the distance $d$ was $d\leq 6$.       
\end{enumerate}

A spreadsheet with the results of the BLAST-based analysis is available as Supplementary~\hyperref[sup:Analysis-output-of-Yan-data]{BLAST Analysis output of Yan \textit{et al.} dat}

\subsection*{BLAST-based analysis of large-scale combinatorial experiment}
\phantomsection
\label{Methods:640BC-8cycles}
The large-scale experimental data was analyzed using an extension of the analysis pipeline presented in \hyperref[Methods:Data-Analysis-HelloWorld-1-cycles]{BLAST-based analysis of the data from Yan \textit{et al.}\cite{yan2023scaling}} with the following adjustments.
A total of 3,855,223 reads where analyzed.

Although BLAST is generally considered a high computational cost algorithm, in our case the “reference genome” consisted only of the universal parts and the set of building-block shortmers. As a result, the effective genome length was very short (less than 600 bp), making the BLAST runtime negligible. Nevertheless, we acknowledge that further optimization could improve decoding performance, and exploring high-performance alignment and decoding methods remains an important direction for future work.

\begin{enumerate}
    \item \textbf{Read length filtering.} Reads where filtered to retain only those with length greater than a predefined threshold of 500 nt. This resulted in 2,136,524 reads. 
    
    \item \textbf{BLAST alignment and positional extraction.} BLAST was used with the same parameters while only the universal sequence database was altered to include all universal sequences in the large design. The output contains each read and their universals positions.


    \item \textbf{Extract universals positions in read.} For each read in the BLAST output, we return a list of universal-sequence positions by analyzing the alignment coordinates reported by BLAST for each universal-sequence. First, the best alignment per universal-sequence is selected based on the bit score. This list of positions is then passed through a correction function that ensures consistent spacing between consecutive universals, using the designed distances from the original barcode layout. That is, each universal-sequence must follow the previous sequence with a valid relative distance. The position for universal-sequences with no matches are assumed based on the previous sequence. This allows the analysis of the read even with some missing universal-sequences. To demonstrate the importance of this correction - while only 32,705 reads contained all 11 universal-sequences, after applying the correction all 8 payload-sequences where identified in as many as 154,071 reads.
    
\item \textbf{Barcode identification} Once the positions of the universal sequences are extracted, we estimate the barcode's location relative to the second universal-sequence (U2). Specifically, we anchor the read by subtracting the barcode length from the position of Universal 2 (U2), and then shift forward by the expected fixed offset: the length of a universal 1 and a 5' linker region See \ref{fig:large_scale_experiment_640BC_8cycles-a}. At this computed position, we attempt to match the barcode against a known inner barcode set, allowing for Levenshtein distance 6. If a valid inner barcode is found, we then check for a matching outer barcode in the prefix of the read with Levenstein distance 4. Only if both inner and outer barcodes are successfully identified, the barcode index is assigned to the read; otherwise, the read is excluded from downstream analysis.

\item \textbf{Payload identification} After the barcode is identified, we locate the payload sequences using the positions of the internal universal sequences. Specifically, we focus on the region between U3 and U11, which frame the payloads in our design. For each expected payload position, we compute the start location by shifting the universal position forward by the known universal length. At each position, we compare the sequence in the read to the expected payload set, using Levenshtein distance 6. If a match is found, the corresponding payload index is recorded. If no match is found or if the read is too short, the corresponding payload position is marked as undetected.


\item \textbf{Reconstruction and Decoding}. Once barcodes and payloads have been identified, we reconstruct the encoded message by aggregating and decoding reads according to their barcode group as follows:

\begin{itemize}
    \item \textbf{Barcode grouping}. Reads are grouped based on their identified barcode, such that each group corresponds to one unique combinatorial sequence.
    
    \item \textbf{Payload aggregation per position}. For each barcode group, and at each payload position, we look at all the reads and collect which payloads appear there. This gives a distribution of payloads for that position, which helps us understand which ones are most likely, even when there are sequencing errors. From the observed distribution, we select the $K$ most frequent payloads. The parameter $K$ is determined by the size of the combinatorial alphabet. If a tie occurs when selecting the top $K$ payloads, one of the candidates is selected arbitrarily. If fewer than $K$ payloads are identified, then we define the amount of payloads missing as asymmetric-erasure-errors. 
    
    \item \textbf{Combinatorial letter assignment}. The selected $K$ payloads are interpreted as a single combinatorial letter from the predefined alphabet. If no match exists, the letter is marked as an erasure.
    
    \item \textbf{Error-correcting}. The full reconstructed sequence of combinatorial letters is then decoded using a two-dimensional Reed–Solomon (2D RS) error-correcting code.
\end{itemize}
\end{enumerate}

\subsection*{Large-scale experiment comparing error-correction approaches}
\phantomsection
\label{Methods:experiment-2D-vs-newEC}
The large-scale experimental data was analyzed using the same analysis pipeline presented in REF with the following minor differences:
\begin{itemize}
    \item Each sequence contained only 7 payload-sequences and therefore the analysis focused on the region between U3-U10.
    \item The decoding of the error correction code was applied according to the experimental setting. The first 320 sequences were decoded using 2D RS while sequences 321-640 were decoded using the new combinatorial error-correction code.
\end{itemize}

\subsection*{Simulation of a combinatorial DNA-based storage system}
\label{Methods:Simulation-of-combinatorial-DNA-based-storage-system}

\subsubsection*{Synthesis and Sequencing Simulation with Errors}
\phantomsection
\label{Methods:Synthesis-and-Sequencing-Simulation-with-Errors}

\begin{itemize}
   
    \item \textbf{Simulation of the synthesis process:} DNA molecules corresponding to the designed sequences are generated using combinatorial \(k\)-mer DNA synthesis (see Fig.~\ref{fig:view of a combinatorial alphabet}).
    \begin{itemize}
        \item For each combinatorial sequence, determine the number of synthesized copies by sampling from $X \sim N(\mu=1000,\sigma^2=100)$. Let \(x\) be the number of copies for a specific sequence.
        \item For every position in the sequence, uniformly sample \(x\) independent \(k\)-mers from the set of member \(k\)-mers of the combinatorial letter in the specific position.
        \item Concatenate the sampled \(k\)-mers to the corresponding position of each of the \(x\) partial molecules, gradually building the full synthesized sequences.
    \end{itemize}

    \item \textbf{Error simulation:} We model synthesis and sequencing errors as follows:
        \begin{itemize}
            \item The probabilities of deletion, insertion, and substitution are specified by the parameters \(P_d\), \(P_i\), and \(P_s\), respectively.
            
            \item Deletion and insertion errors are considered synthesis-related and are applied at the level of entire \(k\)-mers — meaning a full \(k\)-mer may be deleted or inserted at a given position during the simulation.
            
            \item Substitution errors are modeled as sequencing errors and occur at the single-base level. A single base is replaced by another, independently of its position within a \(k\)-mer.
        \end{itemize}

    \item \textbf{Mixing:}. After synthesis, the molecules are randomly mixed to reflect natural molecular combinations. This is implemented by randomized data line shuffle using an SQLite-based method, which allows efficient shuffling even for large-scale datasets~\cite{sqlite}.

    \item \textbf{Reading and sampling:} To simulate the sequencing process, a subset of the synthesized molecules is sampled. The number of sampled reads is set to \(S \times (\text{number\_of\_synthesized\_sequences})\), representing the expected sequencing depth.

\end{itemize}

\subsubsection*{Reconstruction}
\phantomsection
\label{Methods:Simulation-Reconstruction}
The reconstruction and barcode decoding from sequencing reads was carried out using the following steps:

\begin{enumerate}
    \item Barcode decoding. Decode the barcode sequence of each read using Reed–Solomon (RS) error-correcting code.

    \item Grouping by barcode. Group the sequencing reads by their decoded barcode sequences, enabling the reconstruction of the combinatorial sequences.
    
    \item Filtering of read groups. Exclude barcode groups with fewer  than 10\% of the expected number of sampled reads.
    
    \item Combinatorial reconstruction. For each set of reads grouped by the same barcode:
    \begin{itemize}
        \item Traverse the sequence position by position.
        \item At each position, identify the $K$ most common \(k\)-mers across the reads.
        \item Use these frequent \(k\)-mers to determine the corresponding combinatorial letter \(\sigma\).
        \item Calculate the length deviation \(\Delta\) between each read and the expected sequence length.
        \item Exclude reads for which \(|\Delta| > k - 1\), as they likely contain synthesis or sequencing anomalies.
        \item Replace any \(k\)-mers that are not valid members of the combinatorial alphabet with an erasure symbol.
    \end{itemize}
    
    \item Handling missing barcodes. Replace missing barcodes with an erasure sequences to enable proper outer RS decoding.
    
    \item Normalized Byte-level Levenshtein distance calculation.
    \begin{itemize}
        \item Compute the Levenshtein distance between the observed sequence $O$ and the expected sequence $E$.
        \item Normalize this distance by dividing it by the length of the expected sequence:
        \[
        \text{Normalized Levenshtein Distance (O,E)} = \frac{\text{Levenshtein Distance (O,E)}}{\text{Length of Expected Sequence } |O|}
        \]
    \end{itemize}
\end{enumerate}

\subsection*{Code Construction}
\subsubsection*{Definitions and Notations}
For a positive integer $n$, let $[n]\triangleq\{0,\ldots,n-1\}$. For a binary vector $\mathbf{x}$, $w_H(\mathbf{x})$ denotes the Hamming weight (shortly the weight) of $\mathbf{x}$, which is the number of ones in $\mathbf{x}$. 

Let $\Sigma = \{A, C, G, T\}$ be the DNA alphabet and let $k \in \mathbb{N}$ be the $k$-mer (shortmer) length. We let $\mathcal{S} = \{ \mathbf{s}_0, \mathbf{s}_1, \ldots, \mathbf{s}_{N-1} \}$ be a set of $N>1$ \emph{shortmers}, $\mathbf{s}_i \in \Sigma^\ell$ for $i\in [N]$, which are indexed lexicographically. For $K<N$, we define the \emph{$K$-combinatorial composite alphabet of $\mathcal{S}$} as $\Sigma_K^{\mathcal{S}} \triangleq \{ \mathbf{x}_0^{\mathcal{S}}, \mathbf{x}_1^{\mathcal{S}}, \ldots, \mathbf{x}_{\binom{n}{K}-1}^{\mathcal{S}}\}$, where each \emph{combinatorial composite symbol} $\mathbf{x}_i^{\mathcal{S}}$, for $i\in [\binom{N}{K}]$ is a \emph{set} of $K$ different shortmers chosen from the shortmers set $\mathcal{S}$. For simplicity, the set of symbols in $\Sigma_K^{\mathcal{S}}$ can be abstracted as length-$n$ binary vectors of weight $w$ in which every bit indicates whether a shortmer in $\mathcal{S}$ belongs to the set. Thus, every $\mathbf{x}_i^{\mathcal{S}} \in \Sigma_K^N$ is mapped to its indicator binary vector, denoted by $\mathbf{x}_i\in\{0,1\}^n$ and note that $\sum_{j=0}^{n-1} x_{i, j} = K$. 
In the code construction below, we refer to the composite symbols in our alphabet by their binary vector representation and denote the set of length-$n$ binary vectors of weight $K$ by $\Sigma_K^N$.

A sequence of length $M$ over a composite alphabet $\Sigma_K^\mathcal{S}$ is denoted by $\mathcal{X}^{\mathcal{S}} = (\mathbf{x}_{i_0}^\mathcal{S},\ldots,\mathbf{x}_{i_{m-1}}^\mathcal{S}) \in (\Sigma_K^\mathcal{S})^M$. This sequence can be abstracted as an $M \times N$ binary matrix $\mathcal{X}$, in which each row is matched with its corresponding composite symbol from $\Sigma_K^{\mathcal{S}}$. 

For our construction, we consider \emph{the combinatorial composite-DNA channel},
which receives an $M \times N$ matrix $\mathcal{X}$, and outputs a noisy version of $\mathcal{X}$, denoted by $\mathcal{Y}$. Similarly, we denote the rows of the matrix $\mathcal{Y}$ by $\mathbf{y}_h$, $h \in [M]$, such that $\mathbf{y}_h$ is a noisy version of $\mathbf{x}_{i_h}$

Lastly, since the exact shortmers in the set $\mathcal{S}$ do not matter but only the number of shortmers, we refer to the combinatorial composite alphabet from now on by the set $\Sigma_K^N$ and a length-$M$ sequence is simply an $M\times N$ matrix in $\Sigma_K^{M\times N}$, where  $\Sigma_K^{M\times N}$ refers to the set of all $M\times N$ matrices in which the weight of every row is $K$.

\begin{definition} \label{def:1}
\textbf{Composite asymmetric errors.} For a positive integer $e$ and a row vector $\mathbf{x}_i = (x_0, \ldots, x_{n-1}) \in \Sigma_K^N$, we say that the corresponding channel output $\mathbf{y}_i = (y_0, \ldots, y_{n-1}) \in \Sigma_{K-e}^N $, suffers from \emph{$e$ composite-asymmetric errors} if $y_i \le x_i$, and $\sum_{i=0}^{n-1} y_i = K-e.$
\end{definition}

Definition~\ref{def:1} can be extended to matrices (sequences) as described below.

\begin{definition}
\textbf{$(t,e)$-composite asymmetric errors.} For positive integers $e$ and $t$ and a matrix $\mathcal{X} =      \left( \mathbf{x}_0,  \ldots  ,\mathbf{x}_{M-1} \right)^T \in \Sigma_K^{M\times N}$, we say that the channel output matrix $\mathcal{Y} =       \left( \mathbf{y}_0, \ldots, \mathbf{y}_{M-1} \right)^T, $ suffers from \emph{$(t,e)$-composite asymmetric errors} if at most $t$ rows of $\mathcal{X}$ are noisy, each of them suffers from at most $e$ composite-asymmetric errors.
\end{definition}

A \emph{length-$m$ $(n,w)$-composite code $\mathcal{C}$} is a set of matrices over $\Sigma_w^{m\times n}$ and every codeword in $\mathcal{C}$ is referred {to} as a \emph{composite codeword}. We say that a length-$m$ $(n,w)$-composite code is a \emph{$(t,e)$-composite-asymmetric ECC (in short $(t,e)$-CAECC)}, if it can correct any $(t,e)$-composite asymmetric error. Such a code will be referred as an \emph{$[m,(n,w);t,e]$-composite code}.


We denote by $A(m,n,w)$ the size of the set of all binary matrices of dimension $m \times n$, in which each row is of weight exactly $w$, that is  $A(m, n, w) =  |\Sigma_w^{m \times n} |= \binom{n}{w}^m$. We denote by $A(m, n, w; t, e)$ the size of the largest $[m,(n,w);t,e]$-composite code. For a composite code $\mathcal{C} \subseteq \Sigma_w^{m\times n}$, we define its redundancy to be $r(\mathcal{C}) \triangleq \log(|A(m,n,w)|) - \log(|\mathcal{C}|)$. Furthermore, we denote by $r(m,n,w; t, e)$ the minimum redundancy of such a composite code.

\subsubsection*{Mathematical Definition of the Code}
To define our code construction, we first define some parameters of the VT code~\cite{varshamov1965code}. Given a length-$N$ binary vector $\mathbf{x} = (x_0, x_1, \ldots, x_{n-1}) \in \{0,1\}^N$, positive integers $\ell$ and $p$, we define the \emph{$\ell$-VT-syndrome over $p$}~\cite{varshamov1965code} of $\mathbf{x}$, denoted by $\mathbf{s}_{\ell}^p(\mathbf{x})$, as follows 
$\mathbf{s}_{\ell}^p(\mathbf{x}) \triangleq \sum_{i=0}^{N-1} i^\ell x_i \bmod p.$ Note that we have that, $\mathbf{s}_{\ell}^p(X) \in F_p$. For the simplicity of our construction, we assume that $p$ is the smallest prime such that $p\ge N$. According to Bertrand's Postulate~\cite{B1845}, it is known that that $N\leq p\leq 2N$.

Furthermore, for an integer $e \ge 1 $, we also define the \emph{$e$-complete-VT-syndrome over $p$}, denoted by $\textbf{S}_{e}^p(\mathbf{x})$, to be
$ \textbf{S}_{e}^p(\mathbf{x}) \triangleq (\textbf{s}_{1}^p(\mathbf{x}),\textbf{s}_{2}^p(\mathbf{x}),\ldots,\textbf{s}_{e}^p(\mathbf{x})),$ 
and note that $\mathbf{S}_{e}^p(\mathbf{x})\in F_{p^e}$. This definition can be extended to matrices by their rows as follows. Given a matrix $\mathbf{X} \in \Sigma_K^{M\times N}$, whose rows are given by $\mathbf{x}_0, \mathbf{x}_1, \ldots, \mathbf{x}_{m-1}$, and define the \emph{$e$-complete-VT-syndrome-vector over $p$} of $\mathbf{X}$, denoted by $\mathbf{S}_e^p(\mathbf{X})$, to be the vector in which its $i$-th entry corresponds to the $e$-complete-VT-syndrome of the $i$-th row in $\mathbf{X}$. That is, $\mathbf{S}_e^p(\mathbf{X})  \triangleq (\mathbf{S}_e^p(\mathbf{x}_0), \mathbf{S}_e^p(\mathbf{x}_1), \ldots, \mathbf{S}_e^p(\mathbf{x}_{m-1})) \in (F_{p^e})^{m}.$

Using the above definitions, the code construction is given below. 

\textbf{Code Construction}
Let $e\ge 1$, $p$ be the smallest prime number such that $p\ge N$, and $\mathcal{C}_t$ be an $[M, K, t+1]$ code over $F_{p^e}$ capable of correcting $t$ erasures. If $M\le p^e$, then $\mathcal{C}_t$ is selected as an MDS code with $K=M-t$. The code $\mathcal{C}_{M, N, K}^{(t,e)}$ is defined as follows. 
 $\mathcal{C}_{M, N, K}^{(t,e)} =  \{ \mathbf{X} \in \Sigma_K^{M\times N} : \mathbf{S}_e^p \left( \mathbf{X} \right)\in \mathcal{C}_t \}.$

To construct a composite-asymmetric error-correcting code, we first define the following parameters. Let \( e \geq 1 \) be a positive integer, and let \( p \) be the smallest prime number such that \( p \geq n \). We consider an \([m, k, t + 1]\) code \(\mathcal{C}_t\) over the finite field \(\mathbb{F}_{p^e}\), capable of correcting \(t\) erasures. When \( m \leq p^e \), the code \(\mathcal{C}_t\) is selected as a maximum distance separable (MDS) code with dimension \( k = m - t \).

Using these ingredients, we define the composite code as follows:

\begin{construction}
Let \( e \geq 1 \), \( p \) be the smallest prime number such that \( p \geq n \), and let \(\mathcal{C}_t\) be an \([m, m - t, t + 1]\) MDS code over \(\mathbb{F}_{p^e}\). The composite code is defined by:
\[
\mathcal{C}_{m,n,w}^{(t,e)} = \left\{ \mathbf{X} \in \Sigma_w^{m \times n} : \mathbf{S}_e^p(\mathbf{X}) \in \mathcal{C}_t \right\},
\]
where \(\mathbf{S}_e^p(\mathbf{X})\) denotes the \(e\)-complete-VT-syndrome-vector of the matrix \(\mathbf{X}\).
\end{construction}

\begin{theorem}
The code \(\mathcal{C}_{m,n,w}^{(t,e)}\) is a \((t, e)\)-composite-asymmetric error-correcting code (\((t,e)\)-CAECC).
\end{theorem}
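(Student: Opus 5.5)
The proof will exhibit an explicit decoder and show it always returns the transmitted matrix. Let $\mathbf{X}\in\mathcal{C}_{m,n,w}^{(t,e)}$ be transmitted, and let $\mathcal{Y}$ be the output after at most $(t,e)$-composite asymmetric errors. The argument has three steps: locate the noisy rows, recover their syndromes with the MDS code, and recover each missing set from its complete VT syndrome.

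\emph{Step 1: locate the noisy rows.} By Definition~\ref{def:1}, a row that suffers $e'\ge 1$ composite-asymmetric errors satisfies $\mathbf{y}_h\le\mathbf{x}_h$ and $w_H(\mathbf{y}_h)=w-e'<w$. A row that suffers no error satisfies $\mathbf{y}_h=\mathbf{x}_h$ and has weight exactly $w$. So the decoder declares row $h$ erroneous if and only if $w_H(\mathbf{y}_h)<w$. This identifies the set $T$ of noisy rows exactly, with $|T|\le t$. For every noisy row the decoder also learns the number of erased ones, $e_h=w-w_H(\mathbf{y}_h)\le e$.

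\emph{Step 2: recover the syndromes of the noisy rows.} For every $h\notin T$, the decoder computes $\mathbf{S}_e^p(\mathbf{y}_h)=\mathbf{S}_e^p(\mathbf{x}_h)$ directly. It then views each $e$-tuple in $\mathbb{F}_p^e$ as an element of $\mathbb{F}_{p^e}$ via a fixed vector-space identification. The vector $\mathbf{S}_e^p(\mathbf{X})$ is a codeword of $\mathcal{C}_t$, with at most $t$ erased coordinates, namely those indexed by $T$. Since $\mathcal{C}_t$ has minimum distance $t+1$, any two codewords that agree outside $T$ are equal. Hence the erased coordinates $\mathbf{S}_e^p(\mathbf{x}_h)$, $h\in T$, are uniquely determined, for instance by MDS erasure decoding.

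\emph{Step 3: recover each noisy row from its syndrome.} Fix $h\in T$ and let $E\subseteq[n]$, with $|E|=e_h$, be the set of positions where $\mathbf{x}_h$ has a one and $\mathbf{y}_h$ has a zero. Because the syndromes are linear in the row, for each $\ell=1,\dots,e$ we get
\[
P_\ell \triangleq \mathbf{s}_\ell^p(\mathbf{x}_h)-\mathbf{s}_\ell^p(\mathbf{y}_h)=\sum_{a\in E} a^\ell \bmod p .
\]
These are the power sums of $E$, viewed as a subset of $\mathbb{F}_p$. The elements of $E$ are distinct in $\mathbb{F}_p$ because $n\le p$. We also know $P_0=e_h$.

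The key claim is that the power sums $P_1,\dots,P_{e_h}$ determine $E$. I will prove it with Newton's identities,
\[
\ell\,\sigma_\ell=\sum_{j=1}^{\ell}(-1)^{j-1}\sigma_{\ell-j}P_j,\qquad \ell=1,\dots,e_h,
\]
where $\sigma_\ell$ is the $\ell$-th elementary symmetric polynomial of $E$ and $\sigma_0=1$. These identities let us solve recursively for $\sigma_1,\dots,\sigma_{e_h}$, provided each $\ell\le e_h$ is invertible modulo $p$. This holds because $e_h\le w<n\le p$. The elementary symmetric polynomials then determine the degree-$e_h$ polynomial $\prod_{a\in E}(z-a)\in\mathbb{F}_p[z]$. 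Since $\mathbb{F}_p[z]$ is a unique factorization domain, the multiset of roots of this polynomial, and hence the set $E$, is determined.

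The position $0$ contributes $0$ to every $P_\ell$ with $\ell\ge 1$, so it is invisible in the power sums themselves. It is nonetheless correctly accounted for, because $e_h$ is known and $0$ may simply appear as a root of the polynomial. The decoder sets $\mathbf{x}_h=\mathbf{y}_h+\mathbf{1}_E$ for every $h\in T$, which recovers $\mathbf{X}$.

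I expect Step 3 to be the main obstacle. One must check that the power sums up to degree $e$ in characteristic $p$ really pin down a set of size up to $e$. This requires invertibility of $1,\dots,e$ modulo $p$, which is where the choice $p\ge n>w\ge e$ enters, and it requires handling the zero index correctly. Steps 1 and 2 are routine once one observes that asymmetric errors make the noisy rows self-identifying through their weight, which turns the row errors into erasures for $\mathcal{C}_t$.
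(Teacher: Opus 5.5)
Your proposal is correct and follows the same route as the paper's proof: you treat the noisy rows as erasures of the MDS code $\mathcal{C}_t$ to recover their complete VT syndromes, read the syndrome differences as power sums of the error positions, and use the Newton--Girard identities to obtain a locator polynomial whose roots are those positions. Your version is more careful than the paper's on several points the paper leaves implicit: it justifies that noisy rows are located by their weight deficit, handles rows with fewer than $e$ errors, checks that $\ell\le e_h<p$ so the division in Newton's identities is valid, and accounts for the power-sum-invisible position $0$ through the known degree $e_h$.
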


\begin{proof}
Consider an \( m \times n \) matrix \(\mathbf{Y}\), obtained from a codeword \(\mathbf{X} \in \mathcal{C}_{m,n,w}^{(t,e)}\), where \(\mathbf{Y}\) suffers from \(t\) composite-asymmetric errors. Denote by \( 1 \leq i_1, \ldots, i_t \leq m \) the indices of the erroneous rows of \(\mathbf{Y}\), where each row has exactly \( w - e \) ones instead of \(w\). Without loss of generality, we prove the correction for the first erroneous row, \( \mathbf{y}_{i_1} \); the same argument applies to the others.

Since \(\mathcal{C}_t\) is capable of correcting \(t\) erasures, we can use its decoder to recover the correct \(e\)-complete-VT-syndrome of the \(i_1\)-th row, denoted by \( \mathbf{S}_e^p(\mathbf{x}_{i_1}) \). Consequently, we can also recover the individual \(\ell\)-VT-syndromes, \( s_\ell^p(\mathbf{x}_{i_1}) \), for \( 1 \leq \ell \leq e \).

Let \( h_1 < h_2 < \cdots < h_e \) be the set of indices corresponding to the positions in which \(\mathbf{y}_{i_1}\) differs from \(\mathbf{x}_{i_1}\), i.e., the locations of the asymmetric errors. By construction, the following relation holds for every \( 1 \leq \ell \leq e \):
\[
h_1^\ell + h_2^\ell + \cdots + h_e^\ell \equiv s_\ell^p(\mathbf{x}_{i_1}) - s_\ell^p(\mathbf{y}_{i_1}) \pmod{p}.
\]
Therefore, for each erroneous row, we obtain \( e \) such equations. As established in Theorem 1 in~\cite{D10}, these equations can be expressed as the elementary symmetric polynomials of the error locations, leading to a polynomial whose roots are precisely the error positions \( h_1, h_2, \ldots, h_e \).

The coefficients of this polynomial can be determined using the Newton–Girard identities~\cite{Seroul2000}, and its roots are uniquely defined over \(\mathbb{F}_p\) by Lagrange interpolation~\cite{Nathanson2010}. Thus, the set of error positions can be efficiently determined, and the erroneous row can be corrected by simply flipping the bits at these positions. Applying the same process to all \(t\) erroneous rows completes the decoding, which proves the theorem.
\end{proof}

\subsection*{Molecular biology experimental protocols}
\phantomsection
\label{Methods:Synthesis}
\subsubsection*{Synthesis and Sequencing}

Synthesis and sequencing of datasets were both made by Helixworks Technologies, Ltd, MTU Campus, Bishopstown, Rubicon Centre, Cork, Ireland, T12 Y275.

\phantomsection
\label{Methods:Molecular-biology-experimental-protocols}
\subsubsection*{Protocol for large scale proof of concept experiment}
\label{Methods:Barcoding-and-preparation-of-640BC-8cycles-oligonucleotides}
Each plate of 64 oligonucleotides was barcoded using the Oxford Nanopore Technologies (ONT) Native Barcoding Kit 96 V14 (SQK-NBD114.96), and pooled into a single vial. The prepared vials were ready for the adapter-ligation and clean-up steps, as outlined in ONT's Ligation Sequencing Amplicon protocol (SQK-NBD114.96).

\bibliography{sample}

\section*{Acknowledgements} This project has received funding by the European Union (DiDAX, 101115134). Views and opinions expressed are, however, those of the author(s) only and do not necessarily reflect those of the European Union or the European Research Council Executive Agency. Neither the European Union nor the granting authority can be held responsible for them. The authors thank Zohar Yakhini's research group and Eitan Yaakobi's research group for useful discussions. Lastly, the author thanks Nimesh Pinnamaneni from HelixWorks technology for useful discussion, and for his help in conducting this project. 
Preliminary results related to the constructions presented in Section~\hyperref[sec:Error-Correcting-Codes-for-Combinatorial-DNA]{Error-Correcting Codes for Combinatorial DNA} were presented at the IEEE International Symposium on Information Theory (ISIT) 2024~\cite{sabary2024error}.


\section*{Author contributions statement}
All authors designed the study. I.P. and O.S. implemented coding, simulations, and  experimental protocols. I.P and L.A. performed the data analysis. O.S. and R.G. led the formal mathematical investigation. Z.Y, E.Y, and L.A. supervised the study. I.P., O.S. and L.A. led the manuscript writing with contribution from all authors.

\section*{Additional information}
\textbf{Competing interests:} The authors declare no competing interests.


\section*{Data availability}
The raw data is available in ENA (European Nucleotide Archive). The datasets generated and/or analyzed during the current study are available in the ENA (European Nucleotide Archive) repository.
The dataset associated with Section~\hyperref[sec:Experimental-Proof-of-Concept]{Experimental Proof of Concept} is available under Accession Number PRJEB89959, and the dataset supporting Section~\hyperref[Experiment of 320BC 2D RS vs 320BC new error correction]{Experimental Comparison of the Suggested Code to 2D RS} is available under Accession Number PRJEB89961.

\section*{Code availability} 
Simulation of the errors and retrieval pipelines.
    \begin{itemize}
        \item Combi VT code: \href{https://github.com/InbalPreuss/dna_storage_simulation_new_error_correction.git}{GitHub link}
        \item 2D RS code: \href{https://github.com/InbalPreuss/dna_storage_simulation_2d_rs_EC_erasure.git}{GitHub link}
    \end{itemize}
Experiment design, including encoder and decoder.
\begin{itemize}
    \item Experiment of the first large-scale datasets (encoded with 2D RS): \href{https://github.com/InbalPreuss/dna_storage_experiment_helixworks_640bc.git}{GitHub link}
    \item Experiment of the first large-scale datasets (encoded with both 2D RS, and with combi VT code):  \href{https://github.com/InbalPreuss/dna_storage_experiment_helixworks_320bc_320bc.git}{GitHub link}
\end{itemize}
Data Analysis of previously published datasets.
\begin{itemize}
    \item Yan \textit{et al.}~\cite{yan2023scaling} experiment: \href{https://github.com/InbalPreuss/dna_storage_experiment_helloworld.git}{GitHub link}
    \item Preuss \textit{et al.}~\cite{preuss2024efficient}: \href{https://github.com/InbalPreuss/dna_storage_experiment.git}{GitHub link}
\end{itemize}

\clearpage
\setcounter{figure}{0}
\setcounter{table}{0}
\renewcommand{\figurename}{Supplementary Figure}
\renewcommand{\tablename}{Supplementary Table}

\section*{Supplementary Information}
\subsection*{\texorpdfstring{Summary of sequence counts analyzed in the studies conducted by Yan \textit{et al.}~\cite{yan2023scaling} and Preuss \textit{et al.}~\cite{preuss2024efficient}}{Summary of sequence counts from Yan et al. and Preuss et al.}}
\phantomsection
\label{Summary-of-sequnce-counts-analyzed-in-the-studies-conducted-by-Yan--Preuss}
In Table~\ref{sup:tab:Summary of barcode counts analyzed in the study}, we present a comparative summary of sequence counts from the studies conducted by Yan \textit{et al.}\cite{yan2023scaling} and Preuss \textit{et al.}\cite{preuss2024efficient}. These counts highlight the differences in data scope and analytical approach between the two studies~\cite{yan2023scaling,preuss2024efficient}.

\begin{table}[ht]
\centering
\begin{tabular}{|l|l|l|}
\hline
\textbf{BC} & \textbf{Yan \textit{et al.}~\cite{yan2023scaling} Count} & \textbf{Preuss \textit{et al.}~\cite{preuss2024efficient} Count}\\ 
\hline
1 & 924 & 1,365,295\\ \hline 
2 & 535 &  768,755\\ \hline 
3 & 1,073 & \\ \hline
4 & 1,219 & \\ \hline
5 & 901 & \\ \hline
6 & 466 & \\ \hline
7 & 1,027 & \\ \hline
8 & 1,796 & \\ \hline
\end{tabular}
\caption{Summary of barcode counts analyzed in the studies Yan \textit{et al.}~\cite{yan2023scaling} and Preuss \textit{et al.}~\cite{preuss2024efficient}.}
\label{sup:tab:Summary of barcode counts analyzed in the study}
\end{table}

\subsection*{BLAST Analysis output of Yan \textit{et al.} data}
\phantomsection
\label{sup:Analysis-output-of-Yan-data}
The output of the BLAST analysis that uses the Method~\hyperref[Methods:640BC-8cycles]{BLAST-based analysis of large-scale combinatorial experiment} of the first 5 steps, before the Reconstruction and Decoding, is provided as the ancillary file \path{results_good_reads_with_len_bigger_then_y_helloworld.zip}.

\subsection*{Experimental proof of concept 9BC that did not recover out of the 640BC}
\phantomsection
\label{sup:Experimental-proof-of-concept-9BC-that-did-not-recover-out-of-the-640BC}

As detailed in Section~\hyperref[sec:Experimental-Proof-of-Concept]{Experimental Proof of Concept}, when analyzing 100\% of the data, we examined the nine sequences that were not fully recovered, as shown in Fig.\ref{fig:9bc-not-recoverd}. In Fig.\ref{fig:9bc-not-recoverd-a}, only three out of the four payloads were recovered, with the count for an incorrect payload exceeding that of the correct payload. In Fig.\ref{fig:9bc-not-recoverd-b}, only three payload sequences were observed, resulting in an erasure error. Finally, in Fig.\ref{fig:9bc-not-recoverd-c}, more than four payloads were observed in specific cycles, with a tie for the fourth most frequent payload. This necessitated selecting which payload sequence to retain, occasionally leading to the incorrect selection of four payloads and, consequently, the wrong letter, $\sigma$.

\begin{figure*}[htbp]
  \centering
  \begin{subfigure}[b]{0.48\linewidth}
    \centering
    \includegraphics[width=\linewidth]{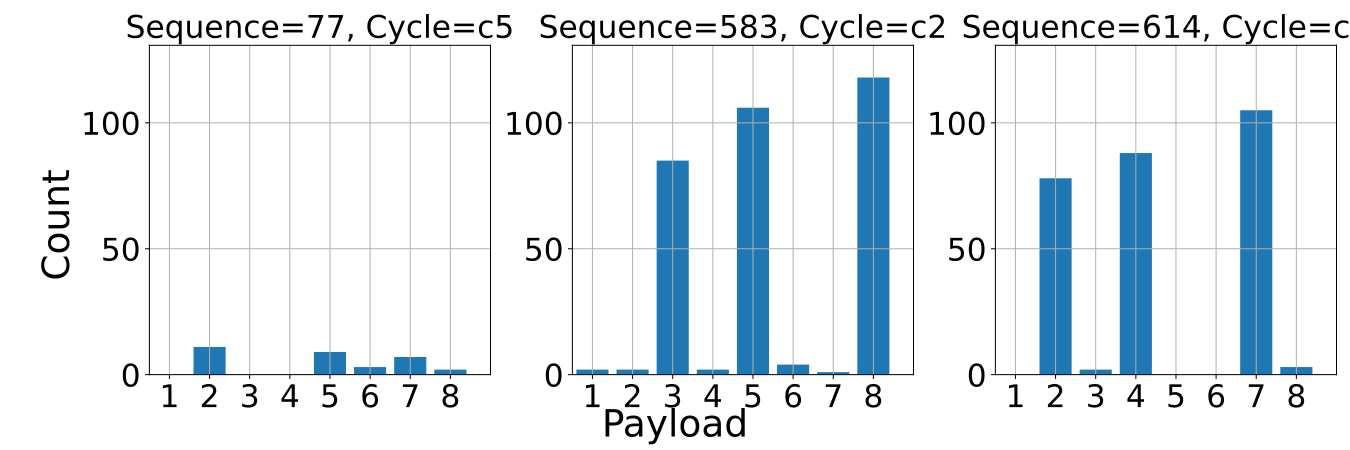}
    \caption{}
    \label{fig:9bc-not-recoverd-a}
  \end{subfigure}
  \hfill 
  \begin{subfigure}[b]{0.48\linewidth}
    \centering
    \includegraphics[width=\linewidth]{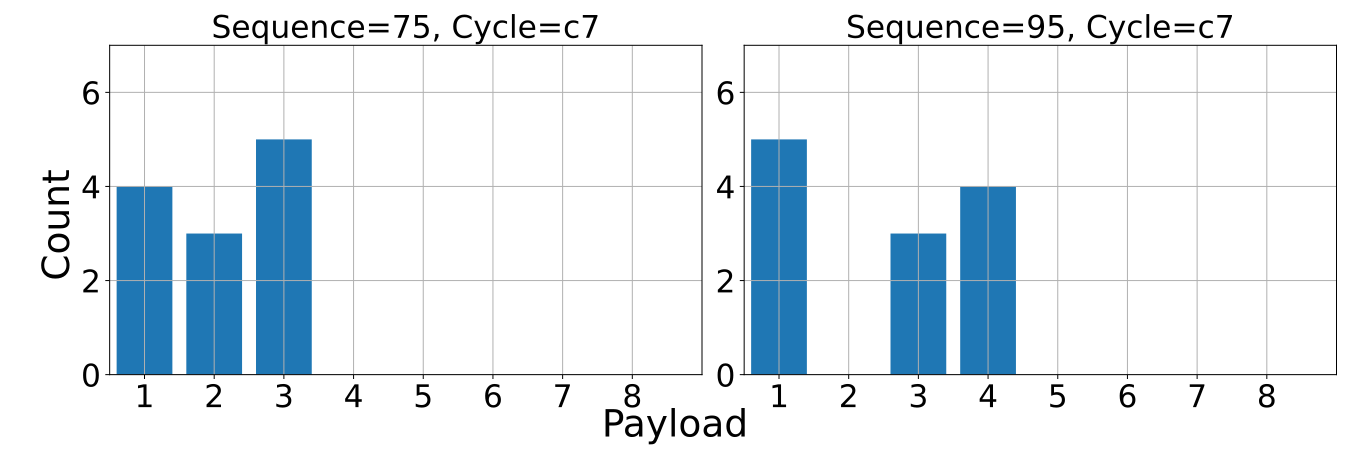}
    \caption{}
    \label{fig:9bc-not-recoverd-b}
  \end{subfigure}

  \begin{subfigure}[b]{0.48\linewidth}
    \centering
    \includegraphics[width=\linewidth]{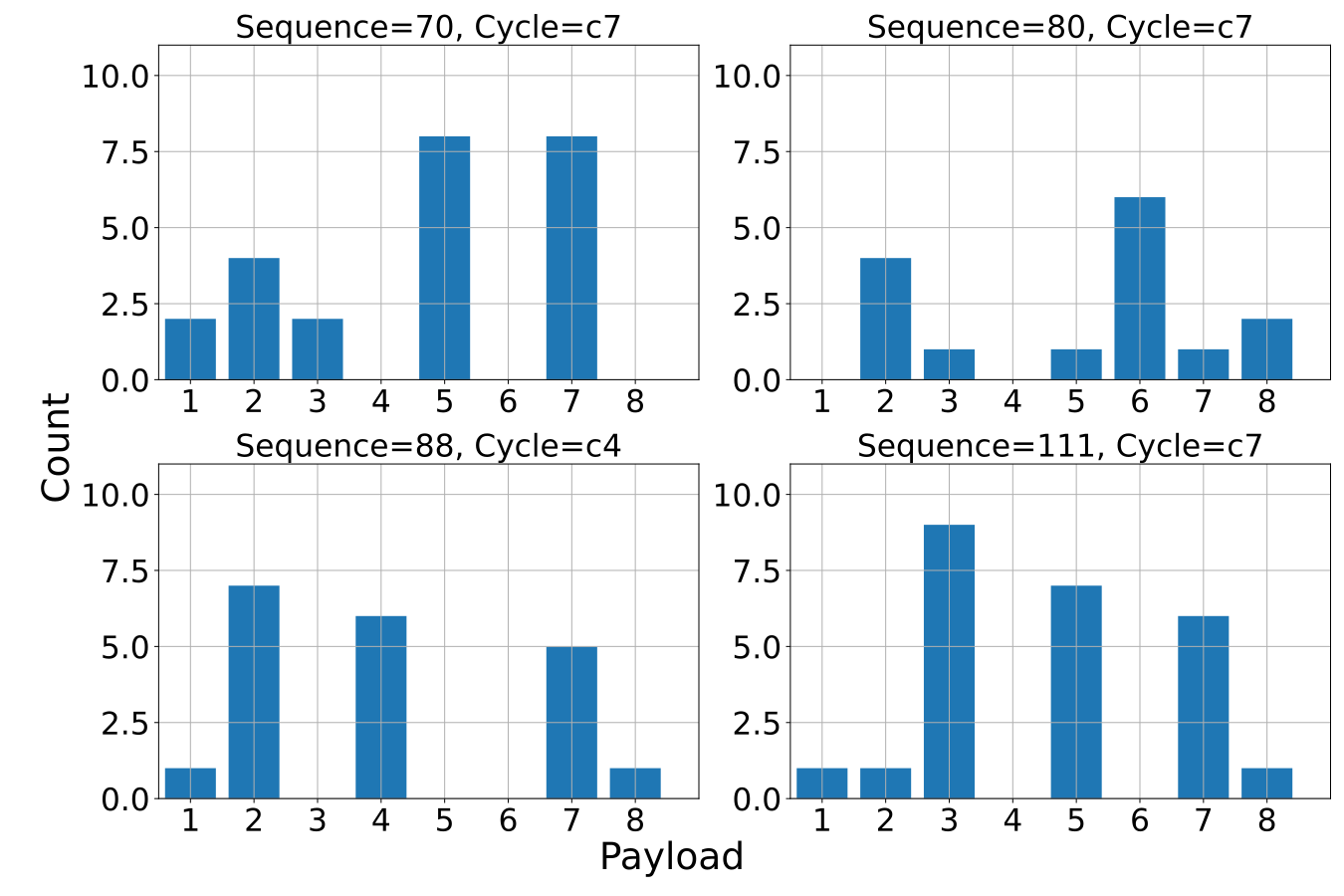}
    \caption{}
    \label{fig:9bc-not-recoverd-c}
  \end{subfigure}
  \caption{Analysis of the nine sequences that were not fully recovered. (a) Histograms showing only three out of the four payloads observed, with the incorrect payload having a higher count than the correct payload. (b) Histograms displaying three payloads recovered instead of four, resulting in an erasure error due to the missing payload. (c) Histograms illustrating more than four payloads in specific cycles, with a tie for the fourth most frequent payload. The incorrect selection of the four payloads led to an error in determining the corresponding letter, $\sigma$.}
  \label{fig:9bc-not-recoverd}
\end{figure*}

\end{document}